\newtheorem{theorem}{Theorem}
\newtheorem{corollary}{Corollary}[theorem]
\newtheorem{prop}{Proposition}
\begin{document}             
\title{MMSE-Optimal Sequential Processing for Cell-Free Massive MIMO With Radio Stripes}
\author{
\medskip
\normalsize
Zakir Hussain Shaik,
{\em Student Member, IEEE},
Emil Bj\"{o}rnson,
{\em Senior Member, IEEE}, and \\
Erik G. Larsson,
{\em Fellow, IEEE}
\normalsize
\thanks{ Preliminary results on a precursor to proposed algorithm were presented in IEEE ICC 2020.
	The authors are with the Department of Electrical    Engineering (ISY), Linköping University, SE-58183 Linköping, Sweden. E. Bj\"ornson is also with the Department of Computer Science, KTH Royal Institute of Technology, SE-10044, Stockholm, Sweden. Emails: \{zakir.hussain.shaik, emil.bjornson, erik.g.larsson\}@liu.se.}
}



\maketitle
\begin{abstract}
Cell-free massive multiple-input-multiple-output (mMIMO) is an emerging technology for beyond 5G with its promising features such as higher spectral efficiency and superior spatial diversity as compared to conventional multiple-input-multiple-output (MIMO) technology. The main working principle of cell-free mMIMO is that many distributed access points (APs) cooperate simultaneously to serve all the users within the network without creating cell boundaries. This paper considers the uplink of a cell-free mMIMO system utilizing the radio stripe network architecture with a sequential fronthaul between the APs. A novel uplink sequential processing algorithm is developed which is proved to be optimal in both the maximum spectral efficiency (SE) and the minimum mean square error (MSE) sense. A detailed quantitative analysis of the fronthaul requirement or signaling of the proposed algorithm and its comparison with competing sub-optimal algorithms is provided. Key conclusions and implications are summarized in the form of corollaries. Based on the analytical and numerical simulation results, we conclude that the proposed scheme can greatly reduce the fronthaul signaling, without compromising the communication performance.
\end{abstract}

\begin{keywords}
\textnormal{Beyond 5G, radio stripes, cell-free massive MIMO, uplink, spectral efficiency, mean square error, sequential processing.
}
\end{keywords}

\IEEEpeerreviewmaketitle

\section{Introduction}
Massive multiple-input-multiple-output (mMIMO) networks, with their manifold benefits over conventional multiple-input-multiple-output (MIMO) networks \cite{LSAS,LS_MIMO,VLM_MIMO} such as high spatial resolution and very high-spectral efficiency (SE), have garnered intense interest in the past decade making it a reality in the year 2018 \cite{newNR,mMIMOreality}. Nevertheless, mMIMO in its original form suffers from large signal-to-noise ratio (SNR) variations between cell center and cell edge users. This problem can be tackled to some extent by small-cell networks \cite{cellFreeErik}, but again they suffer from high inter-cell interference due to its inherent cell-centric implementation. Hence, there's a need for a paradigm shift of networks from cellular to cell-free. This can be achieved by the more recently evolving idea of cell-free mMIMO network which is essentially a decentralized implementation of mMIMO \cite{cellFreeErik,cellFree2,centBjorn}.

A cell-free mMIMO network consists of a central processing unit (CPU) connected to a set of access points (APs) which jointly serve all the user equipments (UEs) in the network. An AP consists of antennas and the signal processing units required to operate them locally. The original idea of cell-free mMIMO networks was to have a dedicated fronthaul and power supply to every AP running up to the CPU (e.g., a star topology) \cite{cellFreeErik}. In the uplink, each AP after receiving the pilot and data signals, forwards them to the CPU, where final fusion of data is done, and the information signals are decoded. However, this network topology requires large fronthaul capacity from the APs to the CPU and for a wired implementation a long cable to be connected between each AP and the CPU, making the cost the main bottleneck for practical implementation. These factors limit its practicability and necessitates the search for more practical architectures that can decentralize this processing and reduce the fronthaul signaling.

In the literature, there have been various techniques and algorithms that are developed for decentralizing the signal processing in mMIMO systems \cite{JeonPartialDecent,Shirazinia2016Decent,Li2017Decent,Burr2018Cooperative,Sadeghi2014Distrub,Li2019Decent,Bassoy_CoMP} which can be adopted to use in cell-free mMIMO implementation. The network topologies of interest include sequential (daisy-chain network), centralized (star-like network), tree and fully connected (mesh network). The choice of a particular topology depends on the application of interest. One of the promising directions for cell-free networks is radio stripes, which utilizes the sequential topology\cite{interdonato2019ubiquitous,icc2020}. This architecture is suitable for deployments in dense areas such as sports arenas and railway stations with many APs and UEs per km$^2$, and large construction elements that the stripes can be attached to. This network comprises of sequentially connected APs in a daisy chain topology and shares the same cable for fronthaul and power supply, as illustrated in Fig. \ref{fig:SysModel}. The sequential implementation of cell-free mMIMO has the potential to deliver the benefits of mMIMO with much lower fronthaul requirements when compared to centralized. In this paper, we develop an optimal uplink processing algorithm for radio stripes in the sense of maximum SE and minimum mean square error (MSE).

\subsection{Related Work}
The decentralization methods found in the literature can be broadly categorized into two types: fully centralized and fully distributed implementation. In the former category, all the required processing is done at the CPU \cite{centBjorn,cellFreeErik,Shirazinia2016Decent},  while in the latter category, all the processing is done locally at the APs \cite{icc2020,JeonPartialDecent,rls2,Bertilsson}, {except for the final fusion at the CPU}, using statistical channel state information (CSI). A centralized implementation has superior performance because of its access to complete information but  the downside is the requirement of a very large fronthaul when compared to other topologies such as sequential processing. There are a few other methods which does not strictly fall into either of the above categories. The method proposed in \cite{JeonPartialDecent} is partially decentralizing the process i.e., the APs partially process the received signals and forward the Gramian of the channel matrix to the CPU where most of the signal processing and estimation of the signal is done. The authors have shown analytically that this strategy achieves the same performance as linear methods such as maximum ratio combining (MRC), zero forcing (ZF) and linear minimum mean square error (LMMSE). However, the authors have only presented the analysis for perfect CSI case.

The relevant works which focused on developing algorithms for a sequential network for mMIMO are \cite{rls2,icc2020,JeonPartialDecent}. The authors in \cite{rls2} proposed algorithms which decentralizes ZF in a sequential manner but only for the perfect CSI case. Moreover, SE analysis is not investigated in  \cite{JeonPartialDecent} and \cite{rls2}. 

\subsection{Contribution}
The main contributions of this paper are:
\begin{enumerate}[(i)]
\item We develop a novel uplink sequential processing algorithm which is proved to be optimal in both the SE and MSE sense.
\item We provide closed-form expressions for the SE and MSE of the proposed method and also for any sequential linear processing algorithm.
\item We prove that the ordering of the APs has no impact on the performance when using the proposed algorithm.
\item We provide update expressions for the SE and MSE when adding additional APs.
\item We quantify and compare the fronthaul requirements of the proposed algorithm with competing algorithms \cite{rls2,icc2020}.
\item To address the latency issue that appears in long sequential networks, we provide a semi-distributed algorithm without loss in performance.
\item We provide numerical and simulation results comparing the proposed algorithm with the existing algorithms for sequential and centralized cell-free networks.
\end{enumerate}

In the conference version \cite{icc2020}, we propose the algorithm that had a trade-off between the SE and fronthaul signaling when compared to a centralized implementation\cite{centBjorn} with LMMSE receiver. Ideally, we would like to achieve the performance of a centralized scheme receiver but most of the fully decentralized algorithms fall short of achieving the optimal SE. This paper proposes a new algorithm that achieves the same performance as the optimal centralized implementation, while reducing the fronthaul requirement, thus the tradeoff issue is resolved.

\subsection{Paper Outline}
The remainder of this paper is organized as follows. Section \ref{SystemModel} presents the system model for uplink cell-free networks with a sequential fronthaul for both payload transmission and channel estimation. Section \ref{SeqProc} presents most of the contributions including an optimal algorithm for sequential signal processing and closed-form expressions of the SE and MSE achieved by the proposed method. In Section \ref{fronthaul}, a quantitative analysis of the fronthaul capacity requirements of the proposed method and other competing algorithms is provided. Numerical results are presented and analyzed in Section \ref{numericalResults}. Finally, the main conclusions of this paper are presented in Section \ref{conclu}. Appendix includes the proof of the main theorem.

\subsection{Notations}
Boldface lowercase letters, $\mathbf{a}$, denote column vectors and boldface uppercase letters, $\mathbf{A}$, denote matrices. The superscripts $(\cdot)^*,~(\cdot)^T,$ and $(\cdot)^H$ denote the conjugate, transpose, and Hermitian transpose, respectively. The notation $\mathbf{I}_N$ represents the $N\times N$ identity matrix. The $(m,n)$th element of a matrix $\mathbf{A}$ is denoted by $[\mathbf{A}]_{mn}$. A block-diagonal matrix is represented by $\mathrm{diag}(\mathbf{A}_1,\cdots,\mathbf{A}_N)$ for square matrices $\mathbf{A}_1,\cdots,\mathbf{A}_N$. The absolute value of a scalar and $l_2$-norm of a vector are denoted by $\vert \cdot \vert$ and $\Vert \cdot \Vert$, respectively. The real value of a scalar is denoted by $\Re\{\cdot\}$. We denote the expectation and variance by $\mathbb{E}\{\cdot\}$ and $\mathrm{Var}\{\cdot\}$, respectively. We use $\mathbf{z} \sim \mathcal{CN}\left(\mathbf{0},\mathbf{C}\right)$ to denote a multi-variate circularly symmetric complex Gaussian random vector with zero mean and covariance matrix $\mathbf{C}$.

\begin{figure}[!htbp]
	\centering
	\includegraphics[width=0.6\textwidth]{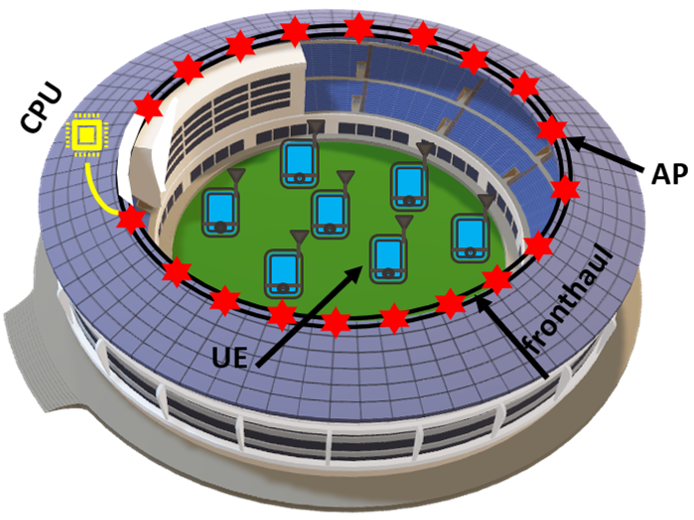}
	\caption{{Radio stripes network deployed over a football arena.}}
	\label{fig:SysModel}
	\vspace{-10mm}
\end{figure}

\section{System Model And Channel Estimation}\label{SystemModel}
We consider a cell-free mMIMO network comprising $L$ APs, each equipped with $N$ antennas. The fronthaul connections are assumed to go from  AP $1$ to AP $2$ $\cdots$ to AP $L$ to the CPU as shown in the Fig. \ref{fig:SysModel}. This architecture with a sequential fronthaul is called a radio stripe network. There are $K$ UEs, each with a single antenna, distributed arbitrarily in the network. We consider the standard block fading channel model with coherence block length of $\tau_c$ channel uses \cite{BjorScalableCellfree}. The channel between AP $l$ and UE $k$ is denoted by $\mathbf{h}_{kl} \in \mathbb{C}^{N}$. In each block, an independent realization is drawn from a correlated Rayleigh fading distribution as
\begin{equation}
\mathbf{h}_{kl} \sim \mathcal{CN} \left(\mathbf{0},\mathbf{R}_{kl}\right),
\end{equation}
where $\mathbf{R}_{kl} \in \mathbb{C}^{N \times N}$ is the spatial correlation matrix, which attributes the spatial channel correlation characteristics and large-scale fading. The large-scale fading coefficient describing the shadowing and pathloss is given by $\beta_{kl}\triangleq \mathrm{tr}\left(\mathbf{R}_{kl}\right)/N$. The spatial correlation matrices $\{\mathbf{R}_{kl}\}$ are assumed to be known at all the APs and the CPU.

This paper studies an uplink scenario where each coherence block consists of $\tau_p$ channel uses for pilot transmission to estimate the channels and $\tau_c - \tau_p$ channel uses for payload data. Both phases are described in detail below.

\subsection{Channel Estimation}
We assume there are $\tau_p$ mutually orthogonal $\tau_p$-length pilot signals $\boldsymbol{\phi}_1, \boldsymbol{\phi}_2,\ldots,\boldsymbol{\phi}_{\tau_p}$ with $\Vert \boldsymbol{\phi}_{k} \Vert^2=\tau_p$, which are used for channel estimation. We are mainly interested in the case $K>\tau_p$, where more than one UE is assigned the same pilot causing pilot contamination. We let the pilot assigned to UE $k$, for $ k=1,\ldots, K$, be denoted by $t_k \in \{1,\ldots,\tau_p\}$ and the set $\mathcal{S}_k = \{ i : t_i = t_k\}$ accounts for those UEs that are assigned the same pilot as UE $k$. 
The received signal $\mathbf{Y}_l^p \in \mathbb{C}^{N \times \tau_p}$ at AP $l$ is
\begin{equation}
\mathbf{Y}_l^p = \sum_{i=1}^{K} \sqrt{p_i}\mathbf{h}_{il}\boldsymbol{\phi}_{t_i}^T + \mathbf{N}_l,
\end{equation}
where $p_i \geq 0$ is the transmit power of UE $i$ and $\mathbf{N}_l \in \mathbb{C}^{N \times \tau_p}$ is the noise at the receiver modeled with independent entries distributed as $\mathcal{CN}\left(0,\sigma^2\right)$ with $\sigma^2$ being the noise power. The estimation of $\mathbf{h}_{kl}$ at AP $l$ proceeds in two phases: first despreading of the received signal is done and then the MMSE estimator is employed. Accordingly, the MMSE channel estimate $\widehat{\mathbf{h}}_{kl}\in \mathbb{C}^{N\times1}$ is given by \cite{stevenkay}
\begin{equation}
\begin{aligned}
\widehat{\mathbf{h}}_{kl}=\sqrt{p_k \tau_p}\mathbf{R}_{kl}\boldsymbol{\Psi}_{t_k l}^{-1}\mathbf{y}_{t_k l}^p,
\end{aligned}
\end{equation}
where
\begin{align}
\mathbf{y}_{t_k l}^p & = \mathbf{Y}_l^p\frac{\boldsymbol{\phi}_{t_k}^{*}}{\sqrt{\tau_p}}\notag\\
&=  \sum_{i \in \mathcal{S}_k}\sqrt{p_i\tau_p}\mathbf{h}_{il} + \mathbf{n}_{t_k l} ,
\end{align}
\begin{align}
\boldsymbol{\Psi}_{t_k l} & = \mathbb{E}\left\{\left(\mathbf{y}_{t_k l}^p-\mathbb{E}\{\mathbf{y}_{t_k l}^p\}\right)\left(\mathbf{y}_{t_k l}^p-\mathbb{E}\{\mathbf{y}_{t_k l}^p\}\right)^H\right\}\notag\\ & = \sum_{i \in \mathcal{S}_k} \tau_p p_i\mathbf{R}_{il} + \sigma^2 \mathbf{I}_N
\end{align}
is the despreaded signal and its covariance matrix, respectively. Here, $\mathbf{n}_{t_k l} \triangleq \mathbf{N}_l {\boldsymbol{\phi}_{t_k}^*}/{\sqrt{\tau_p}} \sim \mathcal{CN}\left(\mathbf{0},\sigma^2\mathbf{I}_{N}\right)$ is the effective noise. An important consequence of MMSE estimation is the statistical independence of the estimate $\widehat{\mathbf{h}}_{kl} \sim \mathcal{CN}(\mathbf{0},\widehat{\mathbf{R}}_{kl})$ and the estimation error $\widetilde{\mathbf{h}}_{kl} = \mathbf{h}_{kl} -\widehat{\mathbf{h}}_{kl} \sim \mathcal{CN}(\mathbf{0},\widetilde{\mathbf{R}}_{kl})$ with
\begin{equation}
\begin{aligned}[b]
\widehat{\mathbf{R}}_{kl} & = \mathbb{E}\left\{\left(\widehat{\mathbf{h}}_{kl}-\mathbb{E}\{\widehat{\mathbf{h}}_{kl}\}\right)\left(\widehat{\mathbf{h}}_{kl}-\mathbb{E}\{\widehat{\mathbf{h}}_{kl}\}\right)^H\right\}\\
&=p_k\tau_p\mathbf{R}_{kl}\boldsymbol{\Psi}_{t_k l}^{-1}\mathbf{R}_{kl},
\end{aligned}
\end{equation}
\begin{equation}
\begin{aligned}[b]
\widetilde{\mathbf{R}}_{kl} & = \mathbb{E}\left\{\left(\widetilde{\mathbf{h}}_{kl}-\mathbb{E}\{\widetilde{\mathbf{h}}_{kl}\}\right)\left(\widetilde{\mathbf{h}}_{kl}-\mathbb{E}\{\widetilde{\mathbf{h}}_{kl}\}\right)^H\right\}\\ & = \mathbf{R}_{kl} -\widehat{\mathbf{R}}_{kl}
\end{aligned}
\end{equation}
as the respective covariance matrices. 

\subsection{Uplink Payload Transmission}
During the uplink payload transmission, the received signal $\mathbf{y}_l \in \mathbb{C}^N$ at AP $l$ is given by 
\begin{equation}\label{ULrecSig2}
\mathbf{y}_l = \mathbf{H}_{l}\mathbf{s} + \mathbf{n}_l,
\end{equation}
where $\mathbf{H}_l = [\mathbf{h}_{1l},\mathbf{h}_{2l},\cdots,\mathbf{h}_{Kl}]\in \mathbb{C}^{N\times K}$ is the channel matrix, $\mathbf{s} = [s_{1},s_{2},\cdots,s_{K}]^T \in \mathbb{C}^{K}$ is the signal vector with $s_k \sim \mathcal{CN}\left(0,p_k\right)$ being the payload signal transmitted by UE $k$ with power $p_k$ and the receiver noise vector $\mathbf{n}_l \sim \mathcal{CN}\left(\mathbf{0},\sigma^2 \mathbf{I}_N\right)$. We assume $s_k$ is independent of $s_m$ for $\ k \neq m$ and the signal vector is distributed as $\mathbf{s}\sim \mathcal{CN}\left(\mathbf{0},\mathbf{Q}\right)$ with $\mathbf{Q} = \textrm{diag}(p_1,\cdots,p_K)$.  Let $\mathbf{H}_{l} = \widehat{\mathbf{H}}_l + \widetilde{\mathbf{H}}_l$ with $ \widehat{\mathbf{H}}_l = [\widehat{\mathbf{h}}_{1l},\widehat{\mathbf{h}}_{2l},\cdots\widehat{\mathbf{h}}_{Kl}]$ being the matrix with channel estimates and $ \widetilde{\mathbf{H}}_l = [\widetilde{\mathbf{h}}_{1l},\widetilde{\mathbf{h}}_{2l},\cdots\widetilde{\mathbf{h}}_{Kl}]$ the  matrix with estimation errors. Accordingly, \eqref{ULrecSig2} is equivalent to
\begin{equation}
\begin{aligned}
\mathbf{y}_l &= \widehat{\mathbf{H}}_l\mathbf{s} + \widetilde{\mathbf{H}}_l\mathbf{s} + \mathbf{n}_l\\ &=  \widehat{\mathbf{H}}_l\mathbf{s} + \mathbf{w}_l,
\end{aligned}
\end{equation}
where $\mathbf{w}_l = \widetilde{\mathbf{H}}_l\mathbf{s} + \mathbf{n}_l$ can be thought of as a colored noise vector with zero mean and covariance matrix
\begin{equation}
\mathbf{\Sigma}_l = \sum_{i=1}^{K}p_i\widetilde{\mathbf{R}}_{il} + \sigma^2\mathbf{I}_N.
\end{equation}
Note that $\mathbf{s}$ is uncorrelated with $\mathbf{w}_l$ but is statistically dependent. 
We denote the estimate of $\mathbf{s}$ at AP $l$ as $\widehat{\mathbf{s}}_l = [\widehat{s}_{1l},\widehat{s}_{2l},\cdots,\widehat{s}_{Kl}]^T$ with $\widehat{s}_{kl}$ being the estimate of $s_k$. We will analyze different ways to compute the estimate in the next section.
\section{Sequential Uplink Processing}\label{SeqProc}
In this section, we derive an optimal sequential receiver algorithm in the sense of simultaneously achieving the maximum SE and the minimum MSE at the CPU. We first briefly describe the optimal receiver for the centralized cell-free mMIMO network \cite{NeyebiMMSE,centBjorn} which is a baseline for performance. We introduce following notation which is utilized throughout this paper:
\begin{equation}\label{augReceiveSig}
\mathbf{z}_l = \widehat{\mathbf{G}}_l\mathbf{s}+\overline{\mathbf{w}}_l
\end{equation}
where 
\begin{equation}\label{augtNotation}
\begin{aligned}
\mathbf{z}_l &= \left[\mathbf{y}_1^H,\cdots,\mathbf{y}_l^H\right]^H,\\
\widehat{\mathbf{G}}_l &=  \left[\widehat{\mathbf{H}}_1^H,\cdots,\widehat{\mathbf{H}}_l^H\right]^H,\\
\overline{\mathbf{w}}_l\ &= \left[\mathbf{w}_1^H,\cdots,\mathbf{w}_l^H\right]^H,
\end{aligned}
\end{equation}
where $\mathbf{z}_l$ can be thought of as the augmented received signal at AP $l \in \{1,\cdots,L\}$, $\widehat{\mathbf{G}}_l$ contains the augmented channel estimates and $\overline{\mathbf{w}}_l$ being the augmented colored noise vector with zero mean and covariance matrix $
\mathbf{K}_l= \textrm{diag}\left(\mathbf{\Sigma}_1,\cdots,\mathbf{\Sigma}_l\right)$.

\subsection{Optimal Centralized Implementation}\label{centOptimal}
In a centralized setup, all the $L$ APs send the received pilot and payload signals to the CPU which after computing channel estimates forms the following augmented received signal:
\begin{equation}\label{centReceiveSig}
\mathbf{z}_L = \widehat{\mathbf{G}}_L\mathbf{s}+\overline{\mathbf{w}}_L.
\end{equation}
The CPU employs the LMMSE receiver which is optimal in both the maximum SE and the minimum MSE sense to compute the data signal estimate as \cite{centBjorn}
\begin{equation}\label{centEst}
\widehat{\mathbf{s}}_L^{c} = \mathbf{V}_L^{c}\mathbf{z}_L,
\end{equation}
where $\widehat{\mathbf{s}}_L^{c} $ is the signal estimate, $\mathbf{V}_L^{c}$ is the receive combining matrix and "c" indicates that it is a centralized scheme. The LMMSE receiver matrix is given as \cite{stevenkay}
\begin{equation}\label{centLMMSE}
\mathbf{V}_L^\textrm{c} = \mathbf{Q}\widehat{\mathbf{G}}_L^H \mathbf{\Lambda}_L^{-1},
\end{equation}
where
\begin{equation}
\mathbf{\Lambda}_L = \left(\mathbf{K}_L + \widehat{\mathbf{G}}_L\mathbf{Q}\widehat{\mathbf{G}}_L^H\right).
\end{equation}
The achievable SE of UE $k$ using the LMMSE receiver is given by \cite{centBjorn}
\begin{equation}\label{seCPU}
\textup{SE}_k^c = \left(1-\frac{\tau_p}{\tau_c}\right)\mathbb{E}\left\{\textup{log}_2\left(1 + \Gamma_k^{c}\right)\right\},
\end{equation}
where the instantaneous effective signal-to-interference-and-noise ratio (SINR) is
\begin{equation}
\Gamma_k^{c} = p_k \widehat{\mathbf{h}}_k^H \left(\sum_{i=1,i\neq k}^{K}p_i\widehat{\mathbf{h}}_i\widehat{\mathbf{h}}_i^H + \mathbf{K}_L\right)^{-1}\widehat{\mathbf{h}}_k.
\end{equation}
The minimum MSE at the CPU is
\begin{equation}\label{mseCPU}
\begin{aligned}
\mathbf{P}_L^c  &=\mathbb{E}\left\{\left(\mathbf{s}-\widehat{\mathbf{s}}_L^{c}\right)\left(\mathbf{s}-\widehat{\mathbf{s}}_L^{c}\right)^H\ |\ \widehat{\mathbf{G}}_L\right\}\\
&= \mathbf{Q} - \mathbf{V}_L^{c} 
\widehat{\mathbf{G}}_{L}\mathbf{Q}.
\end{aligned}
\end{equation}

\subsection{Sequential Linear Processing}
In this subsection, we present a generic class of linear processing algorithms suitable for sequential cell-free mMIMO networks. In general, a sequential processing algorithm starts with AP 1 computing the soft estimate $\widehat{\mathbf{s}}_1$ of the signal $\mathbf{s}$ and forwards the computed estimate and also some useful side information to AP 2 which would help it in making a better estimate. Now, AP $l\in \{2,\cdots,L\}$ upon receiving the information from AP $(l-1)$ computes the soft estimate $\widehat{\mathbf{s}}_l$ of the signal $\mathbf{s}$ and then it forwards the computed estimate along with some useful side information to AP $(l+1)$. This sequential process continues till AP $L$ which forwards the final estimate
$\widehat{\mathbf{s}}_L$ of the signal to the CPU. The CPU might also be co-located with AP $L$.

Beginning with AP 1, the soft estimate $\widehat{\mathbf{s}}_1$ of $\mathbf{s}$ computed by AP 1 is given by
\begin{equation}\label{genEstAP1}
\widehat{\mathbf{s}}_1 = \mathbf{B}_{1}\mathbf{y}_1,
\end{equation}
where $\mathbf{B}_1 \in \mathbb{C}^{K\times N}$ is any receiver combining matrix that can be selected based on the information available at AP $1$ (e.g., channel estimate matrix, channel and noise statistics). AP 1 then forwards the estimate \eqref{genEstAP1} along with other useful side information to AP 2. Then, AP $l\in \{2,\ldots,L\}$ upon receiving the estimate $\widehat{\mathbf{s}}_{(l-1)}$ and other side information from AP $(l-1)$ computes its estimate as follows:
\begin{equation}\label{genSeq}
\widehat{\mathbf{s}}_l = \mathbf{A}_l\widehat{\mathbf{s}}_{(l-1)} + \mathbf{B}_l\mathbf{y}_l,\ \ l\in\{2,\cdots,L\}
\end{equation}
where $\mathbf{A}_l \in \mathbb{C}^{K\times K}$ and $\mathbf{B}_l \in \mathbb{C}^{K\times N}$ are some receiver combining matrices which depend on the information available at AP $l$. Considering the initial estimate $\widehat{\mathbf{s}}_0$ as the zero vector, the computation in \eqref{genSeq} can be generalized to any AP $l \in \{1,\ldots,L\}$. We now present the achievable SE and the MSE at the CPU for any generic sequential linear processing of the form \eqref{genSeq}. First note that, \eqref{genSeq} can be equivalently written as
\begin{equation}\label{genSeq_Cent}
\widehat{\mathbf{s}}_l = \overline{\mathbf{B}}_l\mathbf{z}_l
\end{equation}
where,
\begin{equation}\label{Bbar}
\overline{\mathbf{B}}_{l}= \begin{cases}
\begin{bmatrix}
\mathbf{A}_l\overline{\mathbf{B}}_{(l-1)}\ \ \mathbf{B}_l
\end{bmatrix}, & l> 1\\
\mathbf{B}_1, & l= 1.
\end{cases}
\end{equation}
The following two propositions provide closed-form expressions for the achievable SE and the MSE at the CPU, respectively. Let the augmented receiver matrix in \eqref{Bbar} at the AP $L$ be
\begin{equation}\label{Bbar2}
\overline{\mathbf{B}}_{L} = [\mathbf{b}_1,\cdots,\mathbf{b}_K]^H
\end{equation}
with $\mathbf{b}_k$ being the combining vector for $k$th UE and $\widehat{\mathbf{h}}_k$ be the $k$th column of channel estimation matrix $\widehat{\mathbf{G}}_L$ or equivalently $\widehat{\mathbf{h}}_k = [\widehat{\mathbf{h}}_{k1}^H,\cdots,\widehat{\mathbf{h}}_{kL}^H]^H$ at the final AP $L$. Thus, note the $k$th UE estimate $\widehat{s}_{kL}$ at the CPU is a function of $\mathbf{b}_k$ i.e., $\widehat{s}_{kL}(\mathbf{b}_k)$.
\begin{prop}\label{propSE}
The achievable SE of UE $k$ for any receiver algorithm of the form \eqref{genSeq} is
\begin{equation}\label{SEgen}
\textup{SE}_k (\mathbf{b}_k) = \left(1-\frac{\tau_p}{\tau_c}\right)\mathbb{E}\left\{\textup{log}_2\left(1 + \Gamma_k^{'}(\mathbf{b}_k)\right)\right\},
\end{equation}
where the instantaneous effective SINR is
\begin{equation}
\Gamma_k^{'}\left(\mathbf{b}_k\right) = \frac{p_k\vert \mathbf{b}_k^H\widehat{\mathbf{h}}_k\vert^2}{\sum_{i=1,i\neq k}^{K}p_i\vert \mathbf{b}_k^H\widehat{\mathbf{h}}_i\vert^2+ \mathbf{b}_k^H\mathbf{K}_L\mathbf{b}_k}.
\end{equation}
\end{prop}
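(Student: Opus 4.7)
My plan is to reduce this statement to the already-established centralized SE formula \eqref{seCPU} by invoking the key representation \eqref{genSeq_Cent}. First, I would unroll the recursion \eqref{genSeq} using the definition in \eqref{Bbar} to confirm that $\widehat{\mathbf{s}}_L = \overline{\mathbf{B}}_L \mathbf{z}_L$, so every sequential algorithm of the prescribed form is, from the CPU's viewpoint, a single linear combiner applied to the stacked received vector $\mathbf{z}_L$. Reading off the $k$-th row using \eqref{Bbar2} gives $\widehat{s}_{kL} = \mathbf{b}_k^H \mathbf{z}_L$.

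Substituting \eqref{augReceiveSig} and splitting the inner product into the three natural pieces yields the desired contribution $\mathbf{b}_k^H \widehat{\mathbf{h}}_k\, s_k$, the inter-user interference $\sum_{i\neq k} \mathbf{b}_k^H \widehat{\mathbf{h}}_i\, s_i$, and the effective noise $\mathbf{b}_k^H \overline{\mathbf{w}}_L$. Conditioned on $\widehat{\mathbf{G}}_L$, the data symbols $\{s_i\}$ remain independent zero-mean with variances $p_i$, while the MMSE orthogonality $\widetilde{\mathbf{H}}_l \perp \widehat{\mathbf{H}}_l$ noted just after \eqref{mseCPU}'s precursors preserves the zero mean and the block-diagonal covariance $\mathbf{K}_L$ of $\overline{\mathbf{w}}_L$. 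The interference-plus-noise is then uncorrelated with $s_k$, so the standard worst-case Gaussian-noise bound used in \cite{centBjorn} lower-bounds the mutual information by the Gaussian rate with matched second-order statistics, producing exactly the conditional SINR $\Gamma_k^{'}(\mathbf{b}_k)$ stated in the proposition. Taking $\log_2(1+\cdot)$, averaging over $\widehat{\mathbf{G}}_L$, and applying the pre-log factor $1-\tau_p/\tau_c$ for pilot overhead delivers \eqref{SEgen}. As a sanity check, specializing $\mathbf{b}_k$ to the $k$-th row of the LMMSE receiver $\mathbf{V}_L^{c}$ in \eqref{centLMMSE} must reproduce \eqref{seCPU}, which it does by the matrix inversion lemma.

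The main obstacle I anticipate is justifying the use of the ``side-information'' capacity bound rigorously, because the additive term $\mathbf{b}_k^H \overline{\mathbf{w}}_L$ contains bilinear products of $\widetilde{\mathbf{H}}_l$ and $\mathbf{s}$ and is therefore neither Gaussian nor statistically independent of $\mathbf{s}$. The escape is precisely the uncorrelatedness (noted below equation \eqref{ULrecSig2}) together with the worst-case-noise inequality; once these are invoked the remainder is routine algebra. I do not expect difficulty in identifying the interference covariance as $\sum_{i\neq k} p_i\, \widehat{\mathbf{h}}_i \widehat{\mathbf{h}}_i^H$ nor in simplifying $\mathrm{Var}\{\mathbf{b}_k^H \overline{\mathbf{w}}_L \mid \widehat{\mathbf{G}}_L\} = \mathbf{b}_k^H \mathbf{K}_L \mathbf{b}_k$, since both follow directly from the definitions in \eqref{augtNotation}.
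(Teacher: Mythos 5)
Your proposal is correct and follows essentially the same route as the paper: the paper's proof is a one-line citation to Proposition 1 of \cite{centBjorn}, relying on the reduction $\widehat{\mathbf{s}}_L = \overline{\mathbf{B}}_L\mathbf{z}_L$ already established in \eqref{genSeq_Cent}--\eqref{Bbar}, which is exactly your first step. Your unrolling of the recursion, the signal/interference/noise split conditioned on $\widehat{\mathbf{G}}_L$, and the worst-case uncorrelated-Gaussian-noise bound simply spell out the content of that cited proposition, including the correct handling of the fact that $\overline{\mathbf{w}}_L$ is uncorrelated with but not independent of $\mathbf{s}$.
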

\begin{proof}
	The proof follows from Proposition 1 in \cite{centBjorn}.
\end{proof}
\begin{prop}\label{propMSE}
The MSE of UE $k$ at the CPU for any receiver algorithm of the form \eqref{genSeq} for the given side information including channel estimates is
\begin{equation} \label{MSEgen}
\begin{aligned}
e_k^{'}(\mathbf{b}_k) &= \min_{\mathbf{b}_k} \mathbb{E}\{\vert s_k - \widehat{s}_{kL}(\mathbf{b}_k)\vert^2|\textup{side information}\}\\
&= p_k - 2\Re\{\mathbf{b}_k^H\widehat{\mathbf{h}}_k\}p_k + \mathbf{b}_k^H\mathbf{\Lambda}_L\mathbf{b}_k.\\
\end{aligned}
\end{equation}
\end{prop}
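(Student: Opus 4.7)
The plan is to represent the UE-$k$ estimate in its augmented form $\widehat{s}_{kL}(\mathbf{b}_k) = \mathbf{b}_k^H \mathbf{z}_L$ via \eqref{genSeq_Cent}--\eqref{Bbar2}, substitute the signal model $\mathbf{z}_L = \widehat{\mathbf{G}}_L \mathbf{s} + \overline{\mathbf{w}}_L$ from \eqref{augReceiveSig}, and evaluate the resulting conditional expectations one by one. Concretely, I would expand
\begin{equation*}
|s_k - \mathbf{b}_k^H \mathbf{z}_L|^2 = |s_k|^2 - s_k \mathbf{z}_L^H \mathbf{b}_k - \mathbf{b}_k^H \mathbf{z}_L s_k^{*} + \mathbf{b}_k^H \mathbf{z}_L \mathbf{z}_L^H \mathbf{b}_k
\end{equation*}
and apply $\mathbb{E}\{\cdot\mid\widehat{\mathbf{G}}_L\}$, pulling the deterministic factor $\widehat{\mathbf{G}}_L$ (which is part of the side information) outside.

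For the constant term, $\mathbb{E}\{|s_k|^2\} = p_k$ immediately from $\mathbf{s} \sim \mathcal{CN}(\mathbf{0},\mathbf{Q})$ with $\mathbf{Q} = \mathrm{diag}(p_1,\ldots,p_K)$. For the two cross terms, I would use (i) the diagonal structure of $\mathbf{Q}$, which gives $\mathbb{E}\{s_k \mathbf{s}^H\} = p_k \mathbf{e}_k^T$, and (ii) the fact, noted just after the covariance definition of $\mathbf{w}_l$ in the system model, that $\mathbf{s}$ is uncorrelated with the augmented colored noise $\overline{\mathbf{w}}_L$. This yields $\mathbb{E}\{s_k \mathbf{z}_L^H \mid \widehat{\mathbf{G}}_L\} = p_k\, \mathbf{e}_k^T \widehat{\mathbf{G}}_L^H = p_k \widehat{\mathbf{h}}_k^H$, since $\widehat{\mathbf{h}}_k$ is the $k$th column of $\widehat{\mathbf{G}}_L$ by the definition preceding \eqref{Bbar2}. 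Combining the two cross terms produces $-p_k \mathbf{b}_k^H \widehat{\mathbf{h}}_k - p_k \widehat{\mathbf{h}}_k^H \mathbf{b}_k = -2 p_k \Re\{\mathbf{b}_k^H \widehat{\mathbf{h}}_k\}$.

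For the quadratic term I would compute the conditional second moment
\begin{equation*}
\mathbb{E}\{\mathbf{z}_L \mathbf{z}_L^H \mid \widehat{\mathbf{G}}_L\} = \widehat{\mathbf{G}}_L \mathbf{Q} \widehat{\mathbf{G}}_L^H + \mathbf{K}_L,
\end{equation*}
using once more that $\mathbf{s}$ and $\overline{\mathbf{w}}_L$ are uncorrelated and that $\overline{\mathbf{w}}_L$ has covariance $\mathbf{K}_L$. By the definition of $\mathbf{\Lambda}_L$ given in Section~\ref{centOptimal}, the right-hand side is exactly $\mathbf{\Lambda}_L$, so this term evaluates to $\mathbf{b}_k^H \mathbf{\Lambda}_L \mathbf{b}_k$. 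Summing the three contributions gives the claimed formula $p_k - 2 p_k \Re\{\mathbf{b}_k^H \widehat{\mathbf{h}}_k\} + \mathbf{b}_k^H \mathbf{\Lambda}_L \mathbf{b}_k$.

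There is no real obstacle here; the only subtle point is the conditional expectation structure, namely that $\widehat{\mathbf{G}}_L$ is a deterministic function of the side information and can be treated as a constant, while the residual randomness in $\mathbf{s}$ and $\overline{\mathbf{w}}_L$ contributes through uncorrelated second moments even though $\mathbf{s}$ and $\overline{\mathbf{w}}_L$ are not independent (the estimation error in $\widetilde{\mathbf{H}}_l$ multiplies $\mathbf{s}$). Uncorrelatedness is sufficient for every step above, so the derivation goes through cleanly.
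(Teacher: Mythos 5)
Your derivation is correct and is exactly the standard LMMSE second-moment expansion that the paper implicitly relies on: Proposition~\ref{propMSE} is stated there without proof, and your computation---conditioning on $\widehat{\mathbf{G}}_L$ as deterministic side information, using $\mathbb{E}\{s_k\mathbf{s}^H\}=p_k\mathbf{e}_k^T$ with $\widehat{\mathbf{h}}_k$ the $k$th column of $\widehat{\mathbf{G}}_L$, and invoking only the uncorrelatedness (not independence) of $\mathbf{s}$ and $\overline{\mathbf{w}}_L$ so that $\mathbb{E}\{\mathbf{z}_L\mathbf{z}_L^H\mid\widehat{\mathbf{G}}_L\}=\widehat{\mathbf{G}}_L\mathbf{Q}\widehat{\mathbf{G}}_L^H+\mathbf{K}_L=\mathbf{\Lambda}_L$---supplies precisely the omitted steps, correctly handling the subtlety that $\mathbf{w}_l=\widetilde{\mathbf{H}}_l\mathbf{s}+\mathbf{n}_l$ is statistically dependent on $\mathbf{s}$. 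One remark: the $\min_{\mathbf{b}_k}$ appearing in the first line of \eqref{MSEgen} is evidently a typo in the paper's statement, since the second line is the conditional MSE for an \emph{arbitrary} fixed $\mathbf{b}_k$, which is the quantity your expansion establishes; the minimization only enters later when the OSLP receiver is shown to attain the minimum.
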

One special case of the sequential process in \eqref{genSeq} is MR, which is obtained when the receiver matrix $\mathbf{A}_l$ is a identity matrix and 
\begin{equation}\label{SMR_1}
\mathbf{B}_l =[\widehat{\mathbf{h}}_{1l},\cdots,\widehat{\mathbf{h}}_{Kl}]^H.
\end{equation}
One obtains the following estimate at AP $L$:
\begin{equation}\label{SMR_2}
\widehat{s}_{kL} = \sum_{l=1}^{L}\widehat{\mathbf{h}}_{kl}^H\mathbf{y}_l = \widehat{\mathbf{h}}_k^H \mathbf{z}_L.
\end{equation}

In the next subsection, we provide an optimal choice of receiver combining matrices $\{\mathbf{A}_l\}$ and $\{\mathbf
B_l\}$ among the class of generic receivers which not only maximizes \eqref{SEgen} but also minimizes \eqref{MSEgen}. 
\subsection{Optimal Sequential Linear Processing (OSLP)} 
We will now present a particular sequential algorithm of the form \eqref{genSeq} which we will show that it achieves the same optimal performance as centralized LMMSE but with lower fronthaul requirements.

In the proposed algorithm, the receiver combining matrices are chosen as follows:
\begin{align}
\mathbf{A}_l &= \mathbf{I}_K-\mathbf{T}_{l}\widehat{\mathbf{H}}_l\label{oslpAl}\\
\mathbf{B}_l &=  \mathbf{T}_{l}\label{oslpBl},
\end{align}
where
\begin{equation}\label{Tl}
\mathbf{T}_{l} = \mathbf{P}_{(l-1)}\widehat{\mathbf{H}}_l^H\left(\mathbf{\Sigma}_l + \widehat{\mathbf{H}}_l\mathbf{P}_{(l-1)}\widehat{\mathbf{H}}_l^H\right)^{-1}
\end{equation}
and
\begin{equation}
\begin{aligned}
\mathbf{P}_{(l-1)} = \left(\mathbf{I}_K - \mathbf{T}_{(l-1)}\widehat{\mathbf{H}}_{(l-1)}\right)\mathbf{P}_{l-2}.
\end{aligned}
\end{equation}
Thus, substituting \eqref{oslpAl} and \eqref{oslpBl} in \eqref{genSeq} gives soft estimate at AP $l$ as
\begin{equation}\label{Eq_shatL1}
\widehat{\mathbf{s}}_l = \widehat{\mathbf{s}}_{(l-1)} + \mathbf{T}_l\left(\mathbf{y}_l - \widehat{\mathbf{H}}_l \widehat{\mathbf{s}}_{(l-1)}\right).
\end{equation}
We will prove later that $\mathbf{T}_l$ is an optimal local LMMSE receiver i.e., it minimizes
\begin{equation}
\mathbb{E}\left\{\Vert \mathbf{s} - \widehat{\mathbf{s}}_l \Vert^2\ |\ \widehat{\mathbf{H}}_l,\ \widehat{\mathbf{s}}_{(l-1)},\mathbf{P}_{(l-1)}\right\}
\end{equation}
and $\mathbf{P}_{l}$ is the error covariance matrix at AP $l$ i.e.,
\begin{equation}
\mathbf{P}_{l} = \mathbb{E}\left\{\left(\mathbf{s}-\widehat{\mathbf{s}}_{l}\right)\left(\mathbf{s}-\widehat{\mathbf{s}}_{l}\right)^H\ |\ \widehat{\mathbf{H}}_l,\ \widehat{\mathbf{s}}_{(l-1)},\mathbf{P}_{(l-1)}\right\}.
\end{equation}
After AP $L$ computes the final estimate, it forwards the estimate $\widehat{\mathbf{s}}_L$ to the CPU where the final decoding of the signal is done. The above described algorithm is presented in the form of a pseudo-code in Algorithm \ref{Algo1}. Observe that, each AP $l\in\{1,\cdots,L\}$ can simultaneously compute $\mathbf{T}_{l}\widehat{\mathbf{H}}_{l}$ and $\left(\mathbf{I}_K - \mathbf{T}_{l}\widehat{\mathbf{H}}_{l}\right)$ once for $\tau_c-\tau_p$ channel uses. So in each $\tau_c-\tau_p$ channel use, every AP just has to perform multiplication and addition of signals and forward to the consecutive AP.
\begin{algorithm}[t]
\caption{Optimal Sequential Linear Processing (OSLP) for Radio Stripe}
\begin{algorithmic} \label{Algo1}
	\STATE  1. \textbf{Initialize}: $\widehat{\mathbf{s}}=\mathbf{0},\ \mathbf{P}_0 = \mathbf{Q}$;
	\STATE 2. \textbf{for} $l = 1:L$
	\STATE \quad\ (i)  Compute $\mathbf{T}_{l} = \mathbf{P}_{(l-1)}\widehat{\mathbf{H}}_l^H\left(\mathbf{\Sigma}_l + \widehat{\mathbf{H}}_l\mathbf{P}_{(l-1)}\widehat{\mathbf{H}}_l^H\right)^{-1}$ 
	\STATE \quad\ (ii) Compute $\widehat{\mathbf{s}}_l = \widehat{\mathbf{s}}_{(l-1)} + \mathbf{T}_l(\mathbf{y}_l - \widehat{\mathbf{H}}_l \widehat{\mathbf{s}}_{(l-1)})$
	\STATE \quad\ (iii) Compute $\mathbf{P}_{l} = \left(\mathbf{I}_K - \mathbf{T}_{l}\widehat{\mathbf{H}}_{l}\right)\mathbf{P}_{(l-1)}$
	\STATE \quad \textbf{end}
	\STATE 3. \textbf{Output}: $\widehat{\mathbf{s}}_L$ and $\mathbf{P}_L$
\end{algorithmic}
\end{algorithm}

We will now show that the proposed algorithm is optimal in the sense of minimizing the MSE (i.e., showing that $\mathbf{P}_L = \mathbf{P}_L^c$) and simultaneously maximizing the SE (i.e., it has the same spectral efficiency as in \eqref{seCPU}).
Note that \eqref{Eq_shatL1} can be re-written as in \eqref{genSeq_Cent}
\begin{equation}\label{kalman2cent}
\begin{aligned}
\widehat{\mathbf{s}}_l &=\begin{bmatrix}
\left(\mathbf{I}-\mathbf{T}_{l}\widehat{\mathbf{H}}_l\right)\ \mathbf{T}_{l}
\end{bmatrix}\begin{bmatrix}
\widehat{\mathbf{s}}_{(l-1)}\\ \mathbf{y}_l
\end{bmatrix}\\
&=\begin{bmatrix}
(\overline{\mathbf{V}}_{(l-1)}-\mathbf{T}_l\widehat{\mathbf{H}}_l\overline{\mathbf{V}}_{(l-1)})\ \ \mathbf{T}_l
\end{bmatrix}\mathbf{z}_l\\
&=\overline{\mathbf{V}}_{l}\mathbf{z}_l,
\end{aligned}
\end{equation}
where
\begin{equation}\label{vbarL}
\overline{\mathbf{V}}_{l}= \begin{cases}
\begin{bmatrix}
(\overline{\mathbf{V}}_{(l-1)}-\mathbf{T}_l\widehat{\mathbf{H}}_l\overline{\mathbf{V}}_{(l-1)})\ \ \mathbf{T}_l
\end{bmatrix}, & l> 1\\
\mathbf{T}_1, & l= 1.
\end{cases}
\end{equation}
Equation \eqref{kalman2cent} is important since it establishes the relationship between the proposed sequential processing estimate and the centralized processing estimate in \eqref{centEst}. We prove the proposed algorithm is optimal with the help of following theorem.
\begin{theorem}\label{TheoremOSLP}
In the sequential processing with Algorithm \ref{Algo1}, the estimate obtained at AP $L$ is equivalent to that obtained by centralized processing with LMMSE receiver i.e.,
\begin{equation}\label{shat_oslp_cent}
\widehat{\mathbf{s}}_L = \widehat{\mathbf{s}}_L^{c},
\end{equation}
\end{theorem}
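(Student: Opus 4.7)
The plan is to identify Algorithm \ref{Algo1} as a Kalman-type recursion for estimating the constant vector $\mathbf{s}$ sequentially from the partial observations $\mathbf{y}_1,\mathbf{y}_2,\ldots$, and to prove by induction on $l$ the strengthened statement: for every $l\in\{1,\ldots,L\}$, $\widehat{\mathbf{s}}_l$ is the LMMSE estimate of $\mathbf{s}$ given the stacked observation $\mathbf{z}_l$, and $\mathbf{P}_l$ is the associated MMSE error covariance matrix. Specialising this to $l=L$ and comparing with \eqref{centEst}--\eqref{centLMMSE} immediately yields $\widehat{\mathbf{s}}_L = \widehat{\mathbf{s}}_L^{c}$ (and as a bonus $\mathbf{P}_L=\mathbf{P}_L^{c}$ from \eqref{mseCPU}, which gives the MSE-optimality claim too).

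For the base case $l=1$, the initialisation $\mathbf{P}_0=\mathbf{Q}$ is precisely the prior covariance of $\mathbf{s}$, so $\mathbf{T}_1 = \mathbf{Q}\widehat{\mathbf{H}}_1^H(\mathbf{\Sigma}_1 + \widehat{\mathbf{H}}_1\mathbf{Q}\widehat{\mathbf{H}}_1^H)^{-1}$ is the textbook LMMSE gain for the linear model $\mathbf{y}_1=\widehat{\mathbf{H}}_1\mathbf{s}+\mathbf{w}_1$, yielding $\widehat{\mathbf{s}}_1=\mathbf{T}_1\mathbf{y}_1$ with error covariance $(\mathbf{I}_K-\mathbf{T}_1\widehat{\mathbf{H}}_1)\mathbf{Q}=\mathbf{P}_1$. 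For the inductive step, I would introduce the innovation $\boldsymbol{\nu}_l := \mathbf{y}_l-\widehat{\mathbf{H}}_l\widehat{\mathbf{s}}_{(l-1)} = \widehat{\mathbf{H}}_l(\mathbf{s}-\widehat{\mathbf{s}}_{(l-1)})+\mathbf{w}_l$ and, using the inductive hypothesis together with the block-diagonal structure of $\mathbf{K}_L$, compute its covariance $\mathbb{E}\{\boldsymbol{\nu}_l\boldsymbol{\nu}_l^H\}=\widehat{\mathbf{H}}_l\mathbf{P}_{(l-1)}\widehat{\mathbf{H}}_l^H+\mathbf{\Sigma}_l$ and its cross-covariance $\mathbb{E}\{\mathbf{s}\boldsymbol{\nu}_l^H\}=\mathbf{P}_{(l-1)}\widehat{\mathbf{H}}_l^H$ with $\mathbf{s}$. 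These identify $\mathbf{T}_l$ in \eqref{Tl} as exactly the LMMSE gain from $\boldsymbol{\nu}_l$ to $\mathbf{s}$. Since $\boldsymbol{\nu}_l$ is uncorrelated with $\mathbf{z}_{l-1}$ and, together with $\mathbf{z}_{l-1}$, spans the same second-order subspace as $\mathbf{z}_l$, the standard additive update across orthogonal innovations gives $\widehat{\mathbf{s}}_l^{\mathrm{LMMSE}}(\mathbf{z}_l) = \widehat{\mathbf{s}}_{(l-1)} + \mathbf{T}_l\boldsymbol{\nu}_l$, which is precisely \eqref{Eq_shatL1}, and the companion error-covariance identity collapses to $\mathbf{P}_l=(\mathbf{I}_K-\mathbf{T}_l\widehat{\mathbf{H}}_l)\mathbf{P}_{(l-1)}$, matching Algorithm \ref{Algo1}.

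The delicate point I expect is verifying the two uncorrelatedness conditions $\mathbb{E}\{\mathbf{w}_l\mathbf{s}^H\}=\mathbf{0}$ and $\mathbb{E}\{\mathbf{w}_l\widehat{\mathbf{s}}_{(l-1)}^H\}=\mathbf{0}$, which are needed to read off the innovation statistics above. The former is already noted in Section \ref{SystemModel}, since $\widetilde{\mathbf{H}}_l$ is zero-mean and independent of $\mathbf{s}$. For the latter, the inductive hypothesis says $\widehat{\mathbf{s}}_{(l-1)}$ is a linear function of $\mathbf{z}_{l-1}$, hence of $\mathbf{s}$ and $\mathbf{w}_1,\ldots,\mathbf{w}_{l-1}$; the claim then reduces to the first condition together with the block-diagonal convention $\mathbf{K}_L=\mathrm{diag}(\mathbf{\Sigma}_1,\ldots,\mathbf{\Sigma}_L)$ adopted in the paper's noise model. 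With these two facts in hand the induction closes, and Theorem \ref{TheoremOSLP} follows from the well-known optimality of the Kalman recursion for static-state LMMSE estimation; the representation \eqref{kalman2cent}--\eqref{vbarL} already assembled by the authors lets one conclude the identity $\overline{\mathbf{V}}_L=\mathbf{V}_L^{c}$ as an equivalent by-product.
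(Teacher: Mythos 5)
Your proof is correct, and it takes a genuinely different route from the paper's. The paper also proceeds by induction on the number of APs with the same strengthened hypothesis (receive filters and error covariances agree: $\overline{\mathbf{V}}_l = \mathbf{V}_l^{c}$ and $\mathbf{P}_l = \mathbf{P}_l^{c}$), but its inductive step is purely algebraic: it writes the $(l+1)$-AP matrix $\mathbf{K}_{l+1} + \widehat{\mathbf{G}}_{l+1}\mathbf{Q}\widehat{\mathbf{G}}_{l+1}^H$ in $2\times 2$ block form, applies the block matrix inversion identity \eqref{matrixInvLemma}, and grinds the two blocks $\mathbf{F}_1,\mathbf{F}_2$ of $\mathbf{V}_{(l+1)}^{c}$ down to $\overline{\mathbf{V}}_l - \mathbf{T}_{(l+1)}\widehat{\mathbf{H}}_{(l+1)}\overline{\mathbf{V}}_l$ and $\mathbf{T}_{(l+1)}$, verifying equality of the filters as matrices without ever invoking the orthogonality principle. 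You instead establish the estimation-theoretic content directly: Algorithm \ref{Algo1} is the Kalman filter for a static state, and your inductive step rests on the innovation $\boldsymbol{\nu}_l = \widehat{\mathbf{H}}_l(\mathbf{s}-\widehat{\mathbf{s}}_{(l-1)})+\mathbf{w}_l$, its second-order statistics, and the additive LMMSE update across orthogonal observations. The two uncorrelatedness facts you flag as delicate are exactly the assumptions the paper itself relies on — it notes $\mathbb{E}\{\mathbf{w}_l\mathbf{s}^H\}=\mathbf{0}$ right after defining $\mathbf{w}_l$, and the cross-AP uncorrelatedness is baked into $\mathbf{K}_l = \mathrm{diag}(\mathbf{\Sigma}_1,\ldots,\mathbf{\Sigma}_l)$ — so your reduction of $\mathbb{E}\{\mathbf{w}_l\widehat{\mathbf{s}}_{(l-1)}^H\}=\mathbf{0}$ to these is sound; you are also right that uncorrelatedness (not independence) is all LMMSE theory needs, which matters since $\mathbf{w}_l = \widetilde{\mathbf{H}}_l\mathbf{s}+\mathbf{n}_l$ is statistically dependent on $\mathbf{s}$. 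What each approach buys: yours is shorter, explains \emph{why} the gain \eqref{Tl} has its form, and delivers $\overline{\mathbf{V}}_L = \mathbf{V}_L^{c}$ as a by-product of uniqueness of the LMMSE estimator (the covariance $\mathbf{\Lambda}_L$ is nonsingular since $\sigma^2>0$, so the filter matrix is determined); the paper's computation is more self-contained at the level of matrix identities, requiring no innovation machinery, and its intermediate expressions \eqref{invTermD}--\eqref{invTerms} make the step-by-step equality of $\mathbf{P}_l$ and $\mathbf{P}_l^{c}$ fully explicit.
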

\begin{proof}
The proof is given in Appendix A.
\end{proof}

In a nutshell, Theorem \ref{TheoremOSLP} shows that it is possible to decentralize the LMMSE receiver to a sequential implementation using the Algorithm \ref{Algo1} and we will call this algorithm as optimal sequential linear processing (OSLP) since it is optimal in the generic class of sequential linear processing in the sense of the SE. We introduce the following notations to use in the corollaries that follows:
\begin{align}
\mathbf{A}_l^{o} &= \mathbf{I}_K-\mathbf{T}_{l}\widehat{\mathbf{H}}_l\label{oslpAo}\\
\mathbf{B}_l^{o} &=  \mathbf{T}_{l}\label{oslpBo},
\end{align}
be the OSLP receiver matrices at AP $L$ from \eqref{oslpAl} and \eqref{oslpBl}. The important consequences of Theorem \ref{TheoremOSLP} are presented as corollaries below:
\begin{corollary}
Comparing \eqref{kalman2cent} and \eqref{genSeq_Cent} and from Theorem \ref{TheoremOSLP}, the following relation holds:
\begin{equation}
\textup{SE}_k\left(\{\mathbf{A}_l,\mathbf{B}_l\}\right) \leq \textup{SE}_k(\mathbf{V}_L^c),\ l=1,\ldots,L
\end{equation}
where $\textup{SE}_k(\cdot)$ is the achievable SE of $k$th UE, $\{\mathbf{A}_l,\mathbf{B}_l\}$ are receiver matrices for any generic sequential linear processing given in \eqref{genSeq} and $\mathbf{V}_L^c$ is the centralized LMMSE receiver given in \eqref{centLMMSE}. Equality is achieved with the proposed OSLP algorithm i.e., when $\{\mathbf{A}_l,\mathbf{B}_l\}=\{\mathbf{A}_l^o,\mathbf{B}_l^o\}$. A rigorous lower bound on the capacity using the OSLP can be obtained in a closed form by plugging $\overline{\mathbf{B}}_{L}=\overline{\mathbf{V}}_{L}$ in \eqref{Bbar} and using Proposition \ref{propSE}. 
The maximum achievable SE of UE $k$ is given by
\begin{equation}\label{SE_OSLP}
\textup{SE}_k = \left(1-\frac{\tau_p}{\tau_c}\right)\mathbb{E}\left\{\textup{log}_2(1 + \Gamma_k^{\textrm{max}})\right\},
\end{equation}
where $\Gamma_k^{\textrm{max}}$ is the maximum instantaneous effective SINR given by
\begin{equation}\label{maxSNR}
\Gamma_k^{\textrm{max}} = p_k \widehat{\mathbf{h}}_k^H \left(\sum_{i=1,i\neq k}^{K}p_i\widehat{\mathbf{h}}_i\widehat{\mathbf{h}}_i^H + \mathbf{K}_L\right)^{-1}\widehat{\mathbf{h}}_k.
\end{equation}
\end{corollary}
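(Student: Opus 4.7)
The plan is to combine Proposition \ref{propSE} with Theorem \ref{TheoremOSLP} via a generalized Rayleigh-quotient argument. First, I would observe that \eqref{genSeq_Cent} rewrites \emph{any} sequential linear receiver of the form \eqref{genSeq} as a single matrix $\overline{\mathbf{B}}_L$ acting on the stacked signal $\mathbf{z}_L$, with per-UE rows $\mathbf{b}_k^H$ that can be chosen independently. Proposition \ref{propSE} then tells me that the achievable SE of UE $k$ depends on $\mathbf{b}_k$ only through the SINR $\Gamma_k'(\mathbf{b}_k)$, and is monotonically increasing in it, so the supremum of $\mathrm{SE}_k$ over all sequential algorithms equals the supremum of $\Gamma_k'(\mathbf{b}_k)$ over $\mathbf{b}_k \in \mathbb{C}^{NL}$.

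Second, I would solve this vector optimization. Writing $\Gamma_k'(\mathbf{b}_k) = \frac{p_k \mathbf{b}_k^H \widehat{\mathbf{h}}_k \widehat{\mathbf{h}}_k^H \mathbf{b}_k}{\mathbf{b}_k^H \mathbf{M}_k \mathbf{b}_k}$ with $\mathbf{M}_k = \sum_{i\neq k} p_i \widehat{\mathbf{h}}_i \widehat{\mathbf{h}}_i^H + \mathbf{K}_L$, the standard eigenvalue argument for a rank-one/positive-definite generalized Rayleigh quotient yields $\max_{\mathbf{b}_k} \Gamma_k'(\mathbf{b}_k) = p_k \widehat{\mathbf{h}}_k^H \mathbf{M}_k^{-1} \widehat{\mathbf{h}}_k$, attained (up to irrelevant scaling) by $\mathbf{b}_k \propto \mathbf{M}_k^{-1} \widehat{\mathbf{h}}_k$. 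This is precisely $\Gamma_k^{\max}$ in \eqref{maxSNR} and establishes the stated inequality $\mathrm{SE}_k(\{\mathbf{A}_l,\mathbf{B}_l\}) \le \mathrm{SE}_k(\mathbf{V}_L^c)$.

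Third, I would invoke Theorem \ref{TheoremOSLP} to prove that this upper bound is tight for OSLP. Because $\widehat{\mathbf{s}}_L = \widehat{\mathbf{s}}_L^{c} = \mathbf{V}_L^c \mathbf{z}_L$ holds as an identity in $\mathbf{z}_L$, the effective OSLP receiver $\overline{\mathbf{V}}_L$ must equal $\mathbf{V}_L^c = \mathbf{Q}\widehat{\mathbf{G}}_L^H \mathbf{\Lambda}_L^{-1}$. A Woodbury-style manipulation that pulls the rank-one term $p_k \widehat{\mathbf{h}}_k \widehat{\mathbf{h}}_k^H$ out of $\mathbf{\Lambda}_L^{-1}$ then shows that the $k$-th row of $\mathbf{V}_L^c$ is a positive scalar multiple of $\widehat{\mathbf{h}}_k^H \mathbf{M}_k^{-1}$, i.e., it lies in the SINR-optimal direction found above. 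Plugging $\overline{\mathbf{B}}_L = \overline{\mathbf{V}}_L$ into \eqref{Bbar} and substituting into Proposition \ref{propSE} therefore delivers the closed-form SE expression \eqref{SE_OSLP}.

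The main obstacle I anticipate is reconciling the MMSE-optimality delivered by Theorem \ref{TheoremOSLP} with the SINR-maximality demanded by Proposition \ref{propSE}: these are classical twins but are defined with respect to different objectives (the MSE fixes a scale whereas the SINR is scale-invariant), so I must explicitly verify that the $k$-th row of $\mathbf{Q}\widehat{\mathbf{G}}_L^H \mathbf{\Lambda}_L^{-1}$ is indeed parallel to $\mathbf{M}_k^{-1} \widehat{\mathbf{h}}_k$. This reduces to a single matrix-inversion-lemma application; once that step is in place, both the inequality and its attainment by OSLP follow immediately from Proposition \ref{propSE} and Theorem \ref{TheoremOSLP}.
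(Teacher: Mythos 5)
Your proposal is correct and follows essentially the same route as the paper: the paper's corollary rests on rewriting any sequential scheme as $\overline{\mathbf{B}}_L\mathbf{z}_L$, invoking the known fact (cited from \cite{centBjorn} via Proposition \ref{propSE}) that the centralized LMMSE receiver maximizes the instantaneous SINR with maximum value $p_k\widehat{\mathbf{h}}_k^H\mathbf{M}_k^{-1}\widehat{\mathbf{h}}_k$, and then using Theorem \ref{TheoremOSLP} (whose appendix proof shows $\overline{\mathbf{V}}_L=\mathbf{V}_L^c$ as an identity of receive filters) to conclude equality for OSLP. You merely make explicit the generalized Rayleigh-quotient maximization and the Sherman--Morrison verification that the $k$th row of $\mathbf{Q}\widehat{\mathbf{G}}_L^H\mathbf{\Lambda}_L^{-1}$ is a positive multiple of $\widehat{\mathbf{h}}_k^H\mathbf{M}_k^{-1}$, steps the paper delegates to the cited reference.
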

\begin{corollary}
Comparing \eqref{kalman2cent} and \eqref{genSeq_Cent} and from Theorem \ref{TheoremOSLP}, the following relation holds:
\begin{equation}
e_k^{'}(\mathbf{V}_L^c) \leq  e_k^{'}\left(\{\mathbf{A}_l,\mathbf{B}_l\}\right),\ l=1,\ldots,L
\end{equation}
where $e_k^{'}(\cdot)$ is the MSE of the $k$th UE at the CPU. Equality is achieved with the proposed OSLP algorithm i.e., when $\{\mathbf{A}_l,\mathbf{B}_l\}=\{\mathbf{A}_l^o,\mathbf{B}_l^o\}$. The proposed OLSP algorithm achieves the minimum MSE $e_k^{\textrm{min}}$ given the side information for UE $k$ at the CPU which is computed in closed form by taking the $k$th diagonal entry of the error covariance matrix $\mathbf{P}_L^c$ (since $\mathbf{P}_L=\mathbf{P}_L^c$ ) in \eqref{mseCPU} and is given by
\begin{equation}\label{minMSE}
e_k^{\textrm{min}} = p_k - p_k^2\widehat{\mathbf{h}}_k^H \mathbf{\Lambda}_L^{-1}\widehat{\mathbf{h}}_k.
\end{equation}
\end{corollary}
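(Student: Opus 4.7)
The plan is to exploit two facts already established earlier in the excerpt: (a) every sequential linear processor of the form \eqref{genSeq} can be rewritten as a single linear combining $\widehat{\mathbf{s}}_L = \overline{\mathbf{B}}_L \mathbf{z}_L$ on the augmented received vector $\mathbf{z}_L$, via \eqref{genSeq_Cent} and the recursion \eqref{Bbar}; and (b) the centralized combiner $\mathbf{V}_L^c$ is, by construction, the LMMSE estimator of $\mathbf{s}$ from $\mathbf{z}_L$ conditioned on the side information (including the channel estimates). Since the LMMSE estimator is the unique minimizer of the conditional MSE over all linear combiners of $\mathbf{z}_L$, and since any sequential scheme yields such a linear combiner, we immediately obtain $e_k^{'}(\mathbf{V}_L^c) \le e_k^{'}(\{\mathbf{A}_l,\mathbf{B}_l\})$ for every admissible choice of receiver matrices, which is the desired inequality.

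Next, I would invoke Theorem~\ref{TheoremOSLP} to settle the equality case. That theorem gives $\widehat{\mathbf{s}}_L = \widehat{\mathbf{s}}_L^c$ when OSLP is used, so entry by entry the two estimators produce identical random variables, and hence they yield identical conditional MSEs; in particular $e_k^{'}(\{\mathbf{A}_l^o,\mathbf{B}_l^o\}) = e_k^{'}(\mathbf{V}_L^c)$. Equivalently, one may observe that \eqref{kalman2cent} shows that OSLP produces an augmented combiner $\overline{\mathbf{V}}_L$ and, by Theorem~\ref{TheoremOSLP}, this combiner must coincide (in its action on $\mathbf{z}_L$) with $\mathbf{V}_L^c$ up to the ambiguity that does not affect the estimate, so the MSE matches.

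To derive the closed form \eqref{minMSE}, I would simply read off the $k$th diagonal entry of the centralized error covariance matrix in \eqref{mseCPU}. Substituting $\mathbf{V}_L^c = \mathbf{Q}\widehat{\mathbf{G}}_L^H \mathbf{\Lambda}_L^{-1}$ from \eqref{centLMMSE} gives
\begin{equation*}
\mathbf{P}_L^c = \mathbf{Q} - \mathbf{Q}\widehat{\mathbf{G}}_L^H \mathbf{\Lambda}_L^{-1} \widehat{\mathbf{G}}_L \mathbf{Q},
\end{equation*}
whose $(k,k)$ entry is $p_k - p_k^2 \, \widehat{\mathbf{h}}_k^H \mathbf{\Lambda}_L^{-1} \widehat{\mathbf{h}}_k$, exactly matching \eqref{minMSE}. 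Because Theorem~\ref{TheoremOSLP} further guarantees $\mathbf{P}_L = \mathbf{P}_L^c$, this same scalar is the OSLP error variance for user $k$. As a sanity check, one can alternatively plug $\mathbf{b}_k = p_k \mathbf{\Lambda}_L^{-1}\widehat{\mathbf{h}}_k$ into Proposition~\ref{propMSE} and verify that the quadratic form collapses to the same expression.

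The only step that requires genuine care is the first one, namely that the MSE-optimality of $\mathbf{V}_L^c$ truly dominates the whole sequential class: one must be sure that the mapping from $(\{\mathbf{A}_l\},\{\mathbf{B}_l\})$ to $\overline{\mathbf{B}}_L$ is free, in the sense that no admissible sequential combiner is excluded from the comparison. This is visible from \eqref{Bbar}, since, given any target $\overline{\mathbf{B}}_L = [\mathbf{C}_1\,\cdots\,\mathbf{C}_L]$, one may set $\mathbf{B}_L = \mathbf{C}_L$, $\mathbf{A}_L = \mathbf{C}_{L-1}$ (absorbing the remaining factors into earlier blocks), and continue upward, so every linear combiner of $\mathbf{z}_L$ is realizable. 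With that, the domination by $\mathbf{V}_L^c$ is tight, and Corollary~2 follows without additional computation.
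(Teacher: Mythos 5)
Your proposal is correct and takes essentially the same route as the paper: rewrite any sequential scheme as a single linear combiner on the augmented vector $\mathbf{z}_L$ via \eqref{genSeq_Cent}--\eqref{Bbar}, invoke the LMMSE optimality of $\mathbf{V}_L^c$ over all linear combiners of $\mathbf{z}_L$, obtain equality from Theorem \ref{TheoremOSLP} (which gives $\widehat{\mathbf{s}}_L=\widehat{\mathbf{s}}_L^c$ and $\mathbf{P}_L=\mathbf{P}_L^c$), and read \eqref{minMSE} off the $k$th diagonal entry of $\mathbf{P}_L^c=\mathbf{Q}-\mathbf{Q}\widehat{\mathbf{G}}_L^H\mathbf{\Lambda}_L^{-1}\widehat{\mathbf{G}}_L\mathbf{Q}$ from \eqref{mseCPU} and \eqref{centLMMSE}. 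One remark: your closing ``freeness'' paragraph is not needed for the inequality (only the embedding of sequential schemes into linear combiners matters, and achievability is supplied constructively by Theorem \ref{TheoremOSLP}), and as written it is also slightly off --- the correct trivial realization of a target $\overline{\mathbf{B}}_L=[\mathbf{C}_1\,\cdots\,\mathbf{C}_L]$ is $\mathbf{A}_l=\mathbf{I}_K$, $\mathbf{B}_l=\mathbf{C}_l$, not $\mathbf{A}_L=\mathbf{C}_{L-1}$, and even that realization ignores the paper's constraint that $\mathbf{B}_l$ be computable from information locally available at AP $l$, which is exactly why the centralized combiner cannot be matched ``for free'' and Theorem \ref{TheoremOSLP} is the substantive ingredient.
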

\begin{corollary}\label{corrMonotonic}
The achievable SINR $\Gamma_k$ of UE $k$ increases monotonically with an increase in the number of APs $L$. On other hand, the MSE at the CPU, $e_k$ being inversely related to $\Gamma_k$ as
\begin{equation}\label{mse_se}
e_k^{\textrm{min}} =  \frac{p_k}{1+\Gamma_k^{\textrm{max}}},
\end{equation} 
decreases monotonically with the increase in $L$. 
\end{corollary}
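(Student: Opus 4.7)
\textbf{The plan is} to prove the corollary in three steps: first, establish the algebraic identity \eqref{mse_se} linking $e_k^{\textrm{min}}$ and $\Gamma_k^{\textrm{max}}$; second, show monotonic non-increase of $e_k^{\textrm{min}}$ in $L$ via a standard ``additional observations cannot hurt'' LMMSE argument; third, transfer monotonicity from the MSE to the SINR through the identity. The identity is the technical bridge that makes the two monotonicity claims equivalent, so it is enough to prove only one of them directly.

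For the identity, I would write $\mathbf{\Lambda}_L = \mathbf{A}_k + p_k\widehat{\mathbf{h}}_k\widehat{\mathbf{h}}_k^H$ with $\mathbf{A}_k := \mathbf{K}_L + \sum_{i\neq k}p_i\widehat{\mathbf{h}}_i\widehat{\mathbf{h}}_i^H$, so that $\mathbf{A}_k$ is exactly the matrix whose inverse defines $\Gamma_k^{\textrm{max}}$ in \eqref{maxSNR}. Applying the Sherman--Morrison formula expresses $\mathbf{\Lambda}_L^{-1}$ as $\mathbf{A}_k^{-1}$ minus a rank-one correction; setting $\alpha := \Gamma_k^{\textrm{max}} = p_k\widehat{\mathbf{h}}_k^H\mathbf{A}_k^{-1}\widehat{\mathbf{h}}_k$, a short calculation reduces $p_k^2\widehat{\mathbf{h}}_k^H\mathbf{\Lambda}_L^{-1}\widehat{\mathbf{h}}_k$ to $p_k\alpha/(1+\alpha)$, and substituting into \eqref{minMSE} yields $e_k^{\textrm{min}} = p_k/(1+\alpha)$ as required.

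For monotonicity of the MSE, I would invoke the preceding corollary on MSE-optimality of the centralized LMMSE: the CPU's LMMSE estimator based on $\mathbf{z}_{L+1}$ attains the smallest MSE among all linear estimators of $\mathbf{s}$ built from $\mathbf{z}_{L+1}$. One admissible competitor is obtained by zero-padding the optimal combiner $\mathbf{V}_L^c$ for $\mathbf{z}_L$ across the $N$ new coordinates; because $\mathbf{z}_L$ is the leading sub-vector of $\mathbf{z}_{L+1}$ and $\mathbf{K}_{L+1} = \mathrm{diag}(\mathbf{K}_L,\mathbf{\Sigma}_{L+1})$ is block-diagonal, this padded combiner reproduces $\widehat{\mathbf{s}}_L^c$ exactly and therefore attains $e_k^{\textrm{min}}(L)$. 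Optimality then forces $e_k^{\textrm{min}}(L+1)\le e_k^{\textrm{min}}(L)$. Since $\alpha\mapsto p_k/(1+\alpha)$ is strictly decreasing on $[0,\infty)$, the identity from the first step converts this into $\Gamma_k^{\textrm{max}}(L+1)\ge \Gamma_k^{\textrm{max}}(L)$.

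\textbf{The main obstacle} is more clerical than conceptual. The Sherman--Morrison simplification is routine, and the only delicate point is verifying that the zero-padding argument really reproduces the old estimator --- for which the block-diagonality of $\mathbf{K}_L$ together with the leading-subvector relationship between $\mathbf{z}_L$ and $\mathbf{z}_{L+1}$ are both essential. No result beyond Theorem \ref{TheoremOSLP} and the centralized MSE-optimality corollary is needed.
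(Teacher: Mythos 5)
Your proposal is correct and follows essentially the same route as the paper: the paper's entire proof is the informal ``with more APs the CPU's available information can only grow, so the SINR/MSE is monotone'' argument (transferred to OSLP via Theorem \ref{TheoremOSLP}), which your zero-padded-combiner competitor merely makes rigorous, since $\bigl[\mathbf{b}_k^H\ \mathbf{0}\bigr]\mathbf{z}_{L+1}=\mathbf{b}_k^H\mathbf{z}_L$ and the leading block of $\mathbf{\Lambda}_{L+1}$ is $\mathbf{\Lambda}_L$. Your Sherman--Morrison derivation of \eqref{mse_se} from \eqref{minMSE} and \eqref{maxSNR} is also correct (the paper asserts this identity without proof), and your per-user optimality step is legitimate because minimizing the quadratic in Proposition \ref{propMSE} decouples row-wise, the minimizer being exactly the $k$th row $p_k\widehat{\mathbf{h}}_k^H\mathbf{\Lambda}_L^{-1}$ of $\mathbf{V}_L^c$.
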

\begin{proof}
For an optimal centralized scheme, with increase in number of APs, say AP $L$ to AP $(L+1)$, the available information at the CPU increases from $\{\mathbf{y}_1,\cdots,\mathbf{y}_{L}\}$ to $\{\mathbf{y}_1,\cdots,\mathbf{y}_{L+1}\}$. As there is no loss in the original information i.e., $\{\mathbf{y}_1,\cdots,\mathbf{y}_L\}$, it implies that the SINR (or the MSE) also monotonically increases (or decreases). From Theorem \ref{TheoremOSLP}, the same result follows for the proposed OSLP algorithm with increase in APs.
\end{proof}
\begin{corollary}\label{corrorderingInv}
The performance of the OSLP algorithm is invariant to the order in which the sequential processing is implemented i.e. the ordering of the APs.
\end{corollary}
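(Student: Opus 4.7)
The plan is to reduce the claim to the order-invariance of the centralized LMMSE estimate and then invoke Theorem \ref{TheoremOSLP}. That theorem asserts that, regardless of the labeling of the $L$ APs along the stripe, the OSLP output at AP $L$ coincides with $\widehat{\mathbf{s}}_L^{c}=\mathbf{V}_L^{c}\mathbf{z}_L$ and its conditional error covariance equals $\mathbf{P}_L^c$ in \eqref{mseCPU}. It is therefore enough to show that these centralized quantities are unchanged when the APs are relabeled by an arbitrary permutation $\pi$ of $\{1,\ldots,L\}$.

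First, I would formalize relabeling as multiplication by the $LN\times LN$ block-permutation matrix $\mathbf{\Pi}$ (with $N\times N$ blocks) that realizes $\pi$ on the stacked per-AP quantities in \eqref{augtNotation}. The augmented signal, channel, and noise-covariance then transform as
\begin{equation}
\mathbf{z}_L\mapsto\mathbf{\Pi}\mathbf{z}_L,\qquad \widehat{\mathbf{G}}_L\mapsto\mathbf{\Pi}\widehat{\mathbf{G}}_L,\qquad \mathbf{K}_L\mapsto\mathbf{\Pi}\mathbf{K}_L\mathbf{\Pi}^H,
\end{equation}
where the last identity uses the block-diagonal structure $\mathbf{K}_L=\textrm{diag}(\mathbf{\Sigma}_1,\ldots,\mathbf{\Sigma}_L)$, which is preserved since a block-wise relabeling merely reorders the $\mathbf{\Sigma}_l$. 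Substituting into the definition of $\mathbf{\Lambda}_L$ gives $\mathbf{\Lambda}_L\mapsto\mathbf{\Pi}\mathbf{\Lambda}_L\mathbf{\Pi}^H$, and since $\mathbf{\Pi}$ is unitary ($\mathbf{\Pi}^H=\mathbf{\Pi}^{-1}$) the inverse transforms as $(\mathbf{\Pi}\mathbf{\Lambda}_L\mathbf{\Pi}^H)^{-1}=\mathbf{\Pi}\mathbf{\Lambda}_L^{-1}\mathbf{\Pi}^H$.

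Next, I would plug these relations into \eqref{centEst}--\eqref{centLMMSE} to evaluate the centralized estimate under the permuted ordering and let the $\mathbf{\Pi}^H\mathbf{\Pi}=\mathbf{I}$ factors cancel:
\begin{equation}
\widehat{\mathbf{s}}_{L,\pi}^{c}=\mathbf{Q}(\mathbf{\Pi}\widehat{\mathbf{G}}_L)^H\bigl(\mathbf{\Pi}\mathbf{\Lambda}_L\mathbf{\Pi}^H\bigr)^{-1}(\mathbf{\Pi}\mathbf{z}_L)=\mathbf{Q}\widehat{\mathbf{G}}_L^H\mathbf{\Lambda}_L^{-1}\mathbf{z}_L=\widehat{\mathbf{s}}_L^{c}.
\end{equation}
An analogous computation applied to \eqref{mseCPU} shows that the conditional error covariance $\mathbf{P}_L^c$ is likewise unchanged. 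Hence both the SE in \eqref{SE_OSLP} and the MSE in \eqref{minMSE}, being deterministic functions of $\widehat{\mathbf{s}}_L^c$ and $\mathbf{P}_L^c$, are invariant under relabeling of the APs, and by Theorem \ref{TheoremOSLP} the same holds for the OSLP outputs.

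The only delicate point will be justifying that the block-diagonal structure of $\mathbf{K}_L$ really is preserved under the permutation, i.e., that $\mathbf{\Pi}$ acts on whole $N\times N$ diagonal blocks rather than mixing scalar entries; this is immediate once one recognizes that "reordering the APs" operates on per-AP blocks by definition. Everything else is routine algebraic manipulation with the unitary $\mathbf{\Pi}$, so I expect no real obstacle beyond writing the block-permutation setup cleanly.
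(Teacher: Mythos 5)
Your proposal is correct and follows essentially the same route as the paper: both reduce the claim via Theorem~\ref{TheoremOSLP} to the order-invariance of the centralized LMMSE estimate, which the paper then asserts verbally on the grounds that the CPU utilizes the complete information $\{\mathbf{y}_1,\ldots,\mathbf{y}_L\}$ irrespective of the order in which the signals arrive. Your block-permutation computation with $\mathbf{\Pi}$ (noting $\mathbf{K}_L \mapsto \mathbf{\Pi}\mathbf{K}_L\mathbf{\Pi}^H$, hence $\mathbf{\Lambda}_L \mapsto \mathbf{\Pi}\mathbf{\Lambda}_L\mathbf{\Pi}^H$, with the unitary factors cancelling in $\widehat{\mathbf{s}}_L^{c}$ and $\mathbf{P}_L^c$) merely makes that one-line argument rigorous, which is a sound and welcome refinement rather than a different proof.
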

\begin{proof}
In a centralized processing with LMMSE receiver matrix, the estimate of the signal is obtained by utilizing the complete information received from all the APs in a cell-free setup. Hence, the performance will have no effect if the order in which the signals are received at the CPU is changed. Since the OSLP performance is equal to the centralized LMMSE processing from Theorem \ref{TheoremOSLP}, the required result follows.
\end{proof}
It is worth noting that the Corollary \ref{corrorderingInv} is not true in general for any generic sequential linear processing in \eqref{genSeq}. For instance, it doesn't hold for the algorithms in \cite{icc2020} and \cite{rls2}.

There are two important practical benefits of the OSLP algorithm. The first benefit is that it makes use of local processing capabilities at the APs instead of requiring a CPU with a fast processor. The second benefit is that the achievable SE (or the MSE at the	 CPU) monotonically increase (or MSE decreases) while maintaining the same fronthaul capacity requirements in each link between the APs. We provide mathematical update equations for the SE and the MSE in the following proposition:
\begin{prop}\label{propUpdateEqs}
Let $e_{kl}$ and $\Gamma_{kl}$ denote the MSE and SINR of UE k achieved at AP l when using the OSLP algorithm for given channel estimates. The MSEs achieved at the adjacent APs can be computed using
\begin{equation}
e_{kl} = e_{k(l-1)} - \alpha_{kl},
\end{equation}
where
\begin{equation}
\alpha_{kl} = \left[\mathbf{T}_l \widehat{\mathbf{H}}_l \mathbf{P}_{(l-1)}\right]_{kk}.
\end{equation}
Note that the matrix $\mathbf{T}_l \widehat{\mathbf{H}}_l \mathbf{P}_{(l-1)}$ is non-negative definite and hence $\alpha_{kl}\geq 0$. Using \eqref{mse_se}, the update equation for the achievable SINR, $\Gamma_{kl}$ of UE $k$ at AP $l$ is
\begin{equation}
\begin{aligned}
\Gamma_{kl} = \Gamma_{k(l-1)} +  \gamma_{kl},
\end{aligned}
\end{equation}
where 
\begin{equation}
\gamma_{kl} = \frac{\alpha_{kl}\left(\Gamma_{k(l-1)}+1\right)^2}{p_k - \alpha_{kl}\left(\Gamma_{k(l-1)}+1\right)}.
\end{equation}
Finally, using the logarithmic equality, $\textup{log}_2(a+b)=\textup{log}_2(a)+\textup{log}_2(1+\frac{b}{a})$ and letting $a = 1+\Gamma_{k(l-1)}, b = \gamma_{kl} $, we obtain the update equation of SE of UE $k$ at AP $l$ as
\begin{equation}\label{updateEqSE}
\textup{SE}_{kl} = \textup{SE}_{k(l-1)} + \zeta_{kl}
\end{equation}
where $\zeta_{kl} = \mathbb{E}\left\{\textup{log}_2(1+\frac{\gamma_{kl}}{1+\Gamma_{k(l-1)}})\right\}$.
\end{prop}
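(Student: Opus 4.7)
The plan is to derive the three updates in succession: the MSE update comes directly from Algorithm \ref{Algo1}, the SINR update is obtained from it by invoking the MSE--SINR duality of Corollary \ref{corrMonotonic}, and the SE update is then a simple logarithmic rearrangement. First I would take step (iii) of Algorithm \ref{Algo1}, namely $\mathbf{P}_l = (\mathbf{I}_K - \mathbf{T}_l \widehat{\mathbf{H}}_l)\mathbf{P}_{(l-1)} = \mathbf{P}_{(l-1)} - \mathbf{T}_l \widehat{\mathbf{H}}_l \mathbf{P}_{(l-1)}$, read off the $(k,k)$ entry, and use $e_{kl} = [\mathbf{P}_l]_{kk}$ to obtain $e_{kl} = e_{k(l-1)} - \alpha_{kl}$ with $\alpha_{kl} = [\mathbf{T}_l \widehat{\mathbf{H}}_l \mathbf{P}_{(l-1)}]_{kk}$. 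To justify $\alpha_{kl}\geq 0$ I would substitute the expression \eqref{Tl} for $\mathbf{T}_l$ and use the fact that $\mathbf{P}_{(l-1)}$ is Hermitian (it is an error covariance matrix), which lets me rewrite the product in the manifestly Hermitian form $\mathbf{X}^H(\mathbf{\Sigma}_l + \widehat{\mathbf{H}}_l \mathbf{P}_{(l-1)} \widehat{\mathbf{H}}_l^H)^{-1}\mathbf{X}$ with $\mathbf{X} = \widehat{\mathbf{H}}_l \mathbf{P}_{(l-1)}$. Since $\mathbf{\Sigma}_l \succ 0$, the middle factor is positive definite and the whole product is positive semi-definite, so every diagonal entry is nonnegative.

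Second, to derive the SINR update I would insert $e_{kl} = p_k/(1+\Gamma_{kl})$ and $e_{k(l-1)} = p_k/(1+\Gamma_{k(l-1)})$, both coming from \eqref{mse_se} applied to the subsystems of the first $l$ and $l-1$ APs (Corollary \ref{corrMonotonic} applies since OSLP run on any prefix of APs is still OSLP). The recursion then reads $p_k/(1+\Gamma_{kl}) = p_k/(1+\Gamma_{k(l-1)}) - \alpha_{kl}$. Putting the right-hand side over a common denominator, inverting, and subtracting $1$ gives $\Gamma_{kl} = \Gamma_{k(l-1)} + \gamma_{kl}$ with $\gamma_{kl}$ exactly as stated after a one-line simplification. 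The positivity of the denominator $p_k - \alpha_{kl}(1+\Gamma_{k(l-1)})$ required for the formula to make sense follows from $e_{kl}\geq 0$.

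Finally, for the SE update I would apply the hinted logarithmic identity $\log_2(a+b) = \log_2 a + \log_2(1 + b/a)$ with $a = 1+\Gamma_{k(l-1)}$ and $b = \gamma_{kl}$, so that $\log_2(1+\Gamma_{kl}) = \log_2(1+\Gamma_{k(l-1)}) + \log_2(1 + \gamma_{kl}/(1+\Gamma_{k(l-1)}))$; taking expectations (and absorbing the pre-log factor $1-\tau_p/\tau_c$ implicitly) yields the telescoping identity \eqref{updateEqSE}. None of the steps is genuinely hard, since the whole result is essentially a rearrangement of the covariance recursion baked into Algorithm \ref{Algo1} combined with the MSE--SINR duality. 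The one place that requires some care is the non-negativity of $\alpha_{kl}$: the matrix $\mathbf{T}_l \widehat{\mathbf{H}}_l \mathbf{P}_{(l-1)}$ is not obviously symmetric as written, and one has to rewrite it in the $\mathbf{X}^H(\cdot)^{-1}\mathbf{X}$ form before the positive-semi-definite argument becomes immediate.
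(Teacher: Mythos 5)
Your proof is correct and takes essentially the same route as the paper, which embeds its derivation directly in the statement of the proposition: the MSE recursion is read off from step (iii) of Algorithm \ref{Algo1}, the SINR update follows by applying the duality \eqref{mse_se} to each prefix of APs (valid because Theorem \ref{TheoremOSLP} holds with $L$ replaced by any $l$), and the SE update is the stated logarithmic identity. Your rewriting of $\mathbf{T}_l\widehat{\mathbf{H}}_l\mathbf{P}_{(l-1)}$ as $\mathbf{X}^H\left(\mathbf{\Sigma}_l+\widehat{\mathbf{H}}_l\mathbf{P}_{(l-1)}\widehat{\mathbf{H}}_l^H\right)^{-1}\mathbf{X}$ with $\mathbf{X}=\widehat{\mathbf{H}}_l\mathbf{P}_{(l-1)}$ correctly supplies the non-negative-definiteness claim that the paper merely asserts, and your observation that the denominator positivity follows from $e_{kl}\geq 0$ is likewise sound.
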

\subsection{Normalized LMMSE based sequential processing}\label{NLMMSE}
In the conference version of this paper \cite{icc2020}, a sequential processing algorithm using normalized LMMSE (N-LMMSE)  scheme was presented which is given as a pseudo-code in Algorithm \ref{Algo3}. The main difference between the algorithm presented in Algorithm \ref{Algo3} and the OSLP algorithm is that the former computes the $k$th UE's estimate $\widehat{s}_{kl}$ using only $\widehat{s}_{k(l-1)}$, whereas OSLP uses all the signal estimates in $\widehat{\mathbf{s}}_{(l-1)}$. Hence, the computation of the signal estimate in the OSLP algorithm makes use of more available information than Algorithm \ref{Algo3}, thus the OSLP algorithm always performs superior or equal to Algorithm \ref{Algo3} in the sense of minimum MSE and also maximum SE.

For Algorithm \ref{Algo3}, besides being suboptimal to the OSLP algorithm, the fronthaul signaling in each link between the APs is actually higher. This is quantitatively explained in detail in Section \ref{fronthaul}. 
\begin{algorithm}[t]
\caption{Sequential N-LMMSE Processing from \cite{icc2020}}
\begin{algorithmic} \label{Algo3}
	\STATE  1. Compute local LMMSE receiver, $\mathbf{V}_{1}= [\mathbf{v}_{11},\cdots,\mathbf{v}_{1K}]^H$ given
	$\widehat{\mathbf{H}}_1$ and $\mathbf{\Sigma}_1$
	\STATE 2. Compute $\widehat{\mathbf{s}}_1 = \mathbf{V}_{1}\mathbf{y}_1$ and initialize: $\widehat{\mathbf{H}}_{k1}^{'}=\widehat{\mathbf{H}}_1,\ \mathbf{\Sigma}_{kl}^{'} = \mathbf{\Sigma}_1,\ \forall k \in \{1,\cdots,K\}$
	\STATE 4. \textbf{for} $l=2:L$
	\STATE \quad \quad \textbf{for} $k=1: K$
	\STATE \qquad (a) Compute LMMSE receiver $\mathbf{v}_{kl}$ for $k$th UE, given $\widehat{\mathbf{H}}_{kl}^{'}$ and $ \mathbf{\Sigma}_{kl}^{'}$ where,
	\STATE \qquad \qquad $\widehat{\mathbf{H}}_{kl}^{'} = [\widehat{\mathbf{H}}_{k(l-1)}^{'H}\mathbf{v}_{k(l-1)}, \widehat{\mathbf{H}}_l^H]^H$ is augmented channel estimate and 
	\STATE \qquad  \qquad$\mathbf{\Sigma}_{kl}^{'} = \textrm{diag}(\mathbf{v}_{k(l-1)}^H\mathbf{\Sigma}_{k(l-1)}^{'}\mathbf{v}_{k(l-1)},\mathbf{\Sigma}_l)$ is augmented noise covariance matrix
	\STATE \qquad	 (b) Compute $\widehat{s}_{kl} = \mathbf{v}_{kl}^H[\widehat{s}_{kl}^{~*},\mathbf{y}_l^H]^H $
	\STATE \qquad \textbf{end}
	\STATE \quad \textbf{end}
	\STATE 5. \textbf{Output}: $\widehat{\mathbf{s}}_L$
\end{algorithmic}
\end{algorithm}
\section{Fronthaul Signalling and Latency}\label{fronthaul}
In this section, we analyze two practical aspects crucial for radio stripes implementation, namely the fronthaul signaling capacity in each link connecting the APs and the latency in the network. We study these factors for the different algorithms described in Section \ref{SeqProc}. We define the fronthaul signaling quantitatively as the total number of real symbols that each AP shares to its consecutive AP in an arbitrary coherence block. 

\subsection{Centralized LMMSE Implementation}
In the fully centralized processing scheme the signaling increases along the fronthaul since the signals from every AP much reach the CPU without being merged with the signals from other APs. Hence, we consider the fronthaul signaling to be the total number of real symbols that are being transmitted in all the links between the APs and the CPU, which is the capacity that is required between AP L and the CPU. We assume that the CPU has the information of channel spatial statistics i.e., $\mathbf{R}_{kl}$. Each AP forwards $\tau_cN$ complex symbols corresponding to the received pilot and data signals to the CPU. This amounts to $\tau_cNL$ complex symbols or equivalently $2\tau_cNL$ real symbols being transmitted from the APs to the CPU.
\subsection{The OSLP Algorithm}
In each link between the APs, the following amount of information is being shared:
\begin{enumerate}[(i)]
\item Each AP forwards the computed signal estimate $\widehat{\mathbf{s}}_l$ of the signal $\mathbf{s}$ to the successive AP i.e., from AP $l$ to $(l+1),\ l =1,\cdots L-1$. This corresponds to $2K(\tau_c - \tau_p)$ real symbols.
\item The error covariance matrix $\mathbf{P}_k$, which corresponds to $K^2$ real symbols.
\end{enumerate}
\subsection{The N-LMMSE based sequential processing from \cite{icc2020} (Algorithm \ref{Algo3})}
As described briefly in Section \ref{NLMMSE}, the amount of data that is being shared in each link between the APs is:
\begin{enumerate}[(i)]
\item Each AP forwards the computed signal estimate $\widehat{\mathbf{s}}_l$ of the signal $\mathbf{s}$ to the next AP, thus $2K(\tau_c - \tau_p)$ real symbols.
\item Effective channel estimates $\widehat{\mathbf{H}}_{(l-1)}^{'H}\mathbf{v}_{k(l-1)}\ \forall k$ that is represented by $2K^2$ real symbols.
\item Effective channel matrix estimation statistics $\mathbf{v}_{k(l-1)}^H\mathbf{\Sigma}_{(l-1)}^{'}\mathbf{v}_{k(l-1)}\ \forall k$ which amounts to $K$ real symbols.
\end{enumerate}
The total fronthaul signaling required is $2K^2 + K+ 2K(\tau_c - \tau_p)\ \textrm{real symbols}$. The fronthaul mentioned in \cite{icc2020} is $3K^2 + 2K(\tau_c - \tau_p)\ \textrm{real symbols}$. The extra $K^2 - K$ fronthaul signaling is due to the way the channel estimation error statistics are shared between the APs as per algorithm described therein. The algorithm presented in Algorithm \ref{Algo3} is more efficient way of implementing the algorithm described in \cite{icc2020}.
\subsection{The RLS Algorithm \cite{rls2}}
One of the competing algorithms is RLS which is a recursive implementation of ZF algorithm. This algorithm is also a special case of \eqref{genSeq}  and hence can be implemented in a sequential cell-free network. We will compare our proposed algorithm with the RLS algorithm in the simulation section. The fronthaul analysis of the RLSE algorithm is as follows:
\begin{enumerate}[(i)]
\item Each AP forwards the computed signal estimate $\widehat{\mathbf{s}}_l$ of the signal $\mathbf{s}$ to the successive AP i.e., from AP $l$ to $l+1,\ l =1,\cdots L-1$. This corresponds to $2K(\tau_c - \tau_p)$ real symbols.
\item A side information hermitian-matrix of size $K\times K$, which corresponds to $K^2$ real symbols.
\end{enumerate}

Table \ref{tab_fronthaul} presents the fronthaul requirements for all the algorithms under two categories, one being the data estimates which are shared for every channel use and the statistical parameters (conditioned on available side information) or channel estimates which are shared once in every coherence block. The data estimates shared is same for all the sequential algorithms described above and they amount to $2K(\tau_c-\tau_p)$ real symbols. It can be observed that the fronthaul requirement increases linearly with increase $L$. From this, we conclude that the proposed OSLP algorithm has lower fronthaul requirement than the centralized processing implementation for large $L$ and also over few competing sequential algorithms i.e., the RLS and the Algorithm \ref{Algo3}.

\begin{table}
	\normalsize
	\centering
	\begin{tabular}{ |p{4cm}||p{4cm}|p{8cm}| }
		\hline
		\multicolumn{3}{|c|}{Fronthaul requirement in each coherence block} \\
		\hline
		Methods/Algorithms&Data Estimates &Statistical Parameters/Channel Estimates\\
		\hline
		Centralized Processing&$2\tau_cNL$    &$0$\\
		\hline
		OSLP&   $2K(\tau_c-\tau_p)$  & $K^2$\\
		\hline
		N-LMMSE &$2K(\tau_c - \tau_p)$ & $2K^2 + K$\\
		\hline
		S-MR    &$2K(\tau_c-\tau_p)$ & $K$\\
		\hline
		RLS \cite{rls2} &   $2K(\tau_c-\tau_p)$  & $K^2$\\
		\hline
	\end{tabular}
	\caption{Summary of fronthaul signaling for various algorithms} \label{tab_fronthaul}
	\vspace{-6mm}
\end{table}
\subsection{{Latency}}\label{SecLatency}
The latency is defined as the amount of time required for data signals received at AP $1$ to reach the CPU. The data signals transmitted per channel use would take $L$ time blocks (each time block corresponds to processing time of each AP) to reach the CPU. However, all APs need not wait to process the next received signal i.e., in sequential processing, when AP $l\in\{2,\cdots,L\}$ is computing the estimate of the received signal at any arbitary time block $t_n$, then AP $1$ to AP $(l-1)$ can process the next payload data received in time blocks $t_{n+1}$ to $t_{n+l-1}$ as depicted in Fig. \ref{fig:Latency}. Thus, if $t_u$ channel uses are allocated for transmission of data, there will be $\tau_u+L-1$ rows in the Fig. \ref{fig:Latency} which amounts to latency of $t_u+L$ time blocks as opposed to $t_u L$ time blocks. The idea here is that the APs can process on next consecutive data as soon as they are available. When $t_u \gg L$, which is practically the case, the latency is approximately $t_u$ and the extra delay of $L$ time blocks is small. Thus, for $t_u$ channel uses the latency is approximately $t_c$ time blocks and hence sequential processing practically have less effect on overall latency. The latency can be further reduced by an alternative OSLP algorithm as described below, which is a semi distributed implementation.
\begin{figure}[!htbp]
\centering
\includegraphics[width=0.6\textwidth]{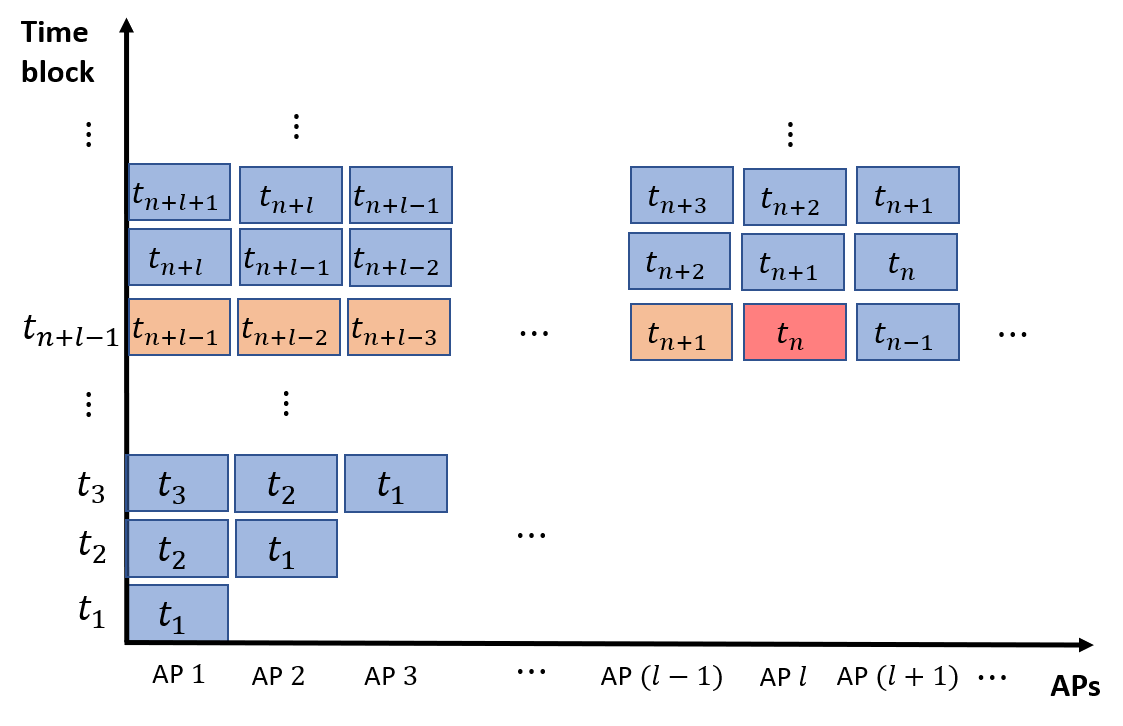}
\caption{{Depiction of processing of data by each AP in each time block.}}
\label{fig:Latency}
\vspace{-12mm}
\end{figure}

To describe the alternative OSLP algorithm (same as the OSLP but different implementation), recall that the estimate of the signal at the CPU with the centralized LMMSE receiver from \eqref{centEst} as
\begin{equation}\label{altOSLP}
\begin{aligned}
\widehat{\mathbf{s}}_L^\textrm{c}= \mathbf{Q}\widehat{\mathbf{G}}_L^H \left(\mathbf{K}_L + \widehat{\mathbf{G}}_L\mathbf{Q}\widehat{\mathbf{G}}_L^H\right)^{-1}\mathbf{z}_L.
\end{aligned}
\end{equation} 
This estimate can be written in an equivalent form as
\begin{equation}
\begin{aligned} \widehat{\mathbf{s}}_L^\textrm{c}&\overset{(a)}{=}\left(\mathbf{Q}^{-1}+\widehat{\mathbf{G}}_L^H\mathbf{K}_L^{-1}\widehat{\mathbf{G}}_L\right)^{-1}\widehat{\mathbf{G}}_L^H\mathbf{K}_L^{-1}\mathbf{z}_L\\
&=\left(\mathbf{Q}^{-1}+\sum_{l=1}^{L}\widehat{\mathbf{H}}_l^H\mathbf{\Sigma}_l^{-1}\widehat{\mathbf{H}}_l\right)^{-1}\left(\sum_{l=1}^{L}\widehat{\mathbf{H}}_l^H\mathbf{\Sigma}_l^{-1}\mathbf{y}_l\right),
\end{aligned}
\end{equation} 
where $(a)$ is an alternative form of the LMMSE receiver matrix \cite{stevenkay}. This approach is similar to the partial decentralization method presented in \cite{JeonPartialDecent} but the authors therein analyzed it for the case of perfect CSI. It is worth noting that the estimate in \eqref{altOSLP} (or equivalently the OSLP estimate) is also the maximum a posteriori (MAP) estimate if the information signal vector is Gaussian distributed. It is interesting to note that this algorithm can be extended to a tree network \cite{Bertilsson}. 

In this alternative form of the OSLP algorithm, after the APs have estimated the channels, they simultaneously compute their quadratic forms $\widehat{\mathbf{H}}_l^H\mathbf{\Sigma}_l^{-1}\widehat{\mathbf{H}}_l$ and then AP $l\in\{1,\cdots,L\}$ forwards the following cumulative sum of quadratic form to AP $(l+1)$ 
\begin{equation}\label{quadForm}
\mathbf{M}_l = \mathbf{M}_{(l-1)}+\widehat{\mathbf{H}}_l^H\mathbf{\Sigma}_l^{-1}\widehat{\mathbf{H}}_l,
\end{equation}
where $\mathbf{M}_0$ is a $K\times K$ matrix with only zeros. When the CPU receives $\mathbf{M}_L$, it computes the inverse matrix in \eqref{altOSLP} (computed once in every coherence block).
Next, in each $\tau_c-\tau_p$ channel uses, all APs simultaneously compute the local weighted MR estimate $\widehat{\mathbf{H}}_l^H\mathbf{\Sigma}_l^{-1}\mathbf{y}_l$ using their corresponding received signal. Then, AP $l\in\{1,\cdots,L\}$ forwards the following cumulative sum of weighted MR estimated signals to the CPU:
\begin{equation}\label{altSMR}
\widetilde{\mathbf{s}}_{l} =\widetilde{\mathbf{s}}_{(l-1)} +  \widehat{\mathbf{H}}_l^H\mathbf{\Sigma}_l^{-1}\mathbf{y}_l,
\end{equation}
where $\widetilde{\mathbf{s}}_{l} = [\widetilde{s}_{1l},\cdots,\widetilde{s}_{Kl}]^T$ is the local weighted MR estimate of the UEs payload with $\widetilde{s}_{kl}$ being the $k$th UE local MR estimate and $\widetilde{\mathbf{s}}_{0} = \mathbf{0}$. Upon receiving $\widetilde{\mathbf{s}}_{L}$, the CPU computes the estimate $\widehat{\mathbf{s}}_{L}^{c}$ as
\begin{equation}\label{altOSLPEst}
\widehat{\mathbf{s}}_{L}^{c} =\left(\mathbf{Q}^{-1}+\mathbf{M}_L\right)^{-1}\widetilde{\mathbf{s}}_{L}.
\end{equation}
With the alternative OSLP algorithm, only \eqref{altSMR} needs to be computed for every channel use and \eqref{quadForm} is computed once in every coherence block, thus helping in reducing the overall latency as it involves only add and forwards mechanism for every channel use. All the results established earlier for the OSLP algorithm holds equally true for the alternative OSLP algorithm as well. The fronthaul requirement for the alternative OSLP is the same as that of original OSLP algorithm, they differ only in the way processing is done. In the former case, final fusion of data is done at the CPU where $K\times K$ matrix inversion is required to obtain the final estimate, while in the later case processing is done at APs where $N\times N$ matrix inversion is computed. But it should be noted that, when the APs are capable of implementing the original OSLP algorithm then it is advantageous over alternative OSLP algorithm mainly because the alternative OSLP algorithm depends on the number of UEs and with increase in the number UEs the computation of quadratic form \eqref{quadForm} at the APs and especially inverting the $K\times K$ matrix in \eqref{altOSLPEst} at the CPU would be computationally expensive. Thus, the original OSLP and the alternative OSLP algorithm trade-off in terms of computational complexity and latency.
\section{Numerical Results and Discussions}\label{numericalResults}
In this section, we evaluate the performance of the proposed OSLP algorithm through numerical results by considering a simulation setup in an area of 125 m $\times$ 125 m. We consider the achievable SE as the performance metrics. We assume that the APs are placed equidistant on a radio stripe of length 500 m which is wrapped around the square perimeter of the simulation area. The propagation model considered for analysis is 3GPP Urban Microcell model \cite[Table~B.1.2.1-1]{LTE2010b} with 2 GHz carrier frequency and the large-scale fading coefficient as
\begin{equation}
\beta_{kl} = -30.5 - 36.7\textrm{log}_{10}\left(\frac{{d}_{kl}}{1\textrm{m}}\right),
\end{equation} 
where $d_{kl}$ is the distance between AP $l$ and UE $k$ (this includes a vertical height difference of 5 m between the APs and the UEs). We assume further that the UEs are uniformly distributed within the concentric square of $100$ m $\times$ $100$ m. Each UE transmits with 50 mW power unless otherwise mentioned. The noise power $\sigma^2$ is $-92$ dBm, the bandwidth is 100 MHz, the coherence block length $\tau_c = 2000$ channel uses, and the number of orthogonal pilot sequences $\tau_p = \textrm{min}(K,20)$. The pilot assignment is done as per the algorithm in \cite{BjorScalableCellfree}. The total number of APs is $L = 24$ and each has $N=4$ antennas unless otherwise stated. The spatial correlation is modeled using the local scattering model \cite[Sec~2.6]{massivemimobook}. We consider a uniform linear array for each AP with half wavelength antennas spacing and the multipath components are Gaussian distributed in the angular domain with a 15 degree standard deviation around the nominal angle to the user. Note that the results presented do not include the alternative OSLP algorithm because it has the same performance as the original OSLP algorithm in Algorithm \ref{Algo1} and all the results presented equivalently hold for both implementations.

In Fig. \ref{figSE}, we plot the CDFs of the SE for users at random locations for the proposed OSLP algorithm and compare with other competing algorithm including centralized scheme. The results demonstrate and validate the claim for the equivalence of the proposed OSLP algorithm with the centralized LMMSE implementation (labelled as "Cent LMMSE" in plots). Fig. \ref{figSE} also demonstrates that the S-MR (sequential MR given in \eqref{SMR_1}) and MR in a centralized setup (labelled as "Cent MR" in plots) are equal. The results plotted are for the case when the number of UEs considered is $K=10$. We also compare these results with Algorithm \ref{Algo3} (labelled "Algo. 2") and we observe that it has superior performance over MR. This is because Algorithm \ref{Algo3} not only makes use of prior knowledge of the channel and the noise statistics but also takes the advantage of APs cooperation in a sequential setup with side information from other APs which help in suppressing the interference more efficiently than MR. 
\begin{figure}[t!]
	\centering
	\includegraphics[width=0.6\textwidth]{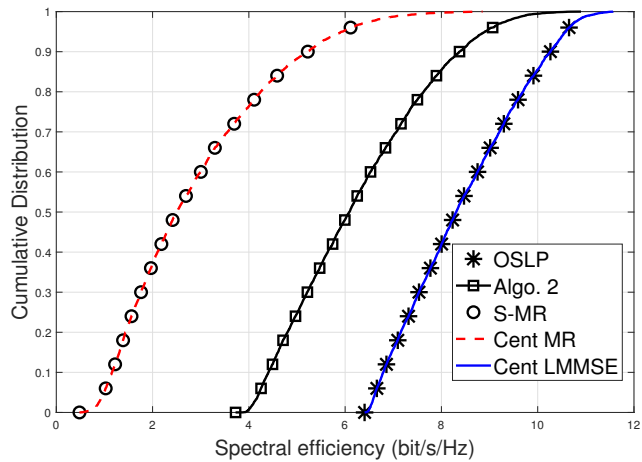}
	\caption{Comparison of the proposed OSLP algorithm with centralized LMMSE scheme $L=24$ and $K=10$ with the CDF of the SE.}
	\label{figSE} \vspace{-8mm}
\end{figure}

In Fig. \ref{figSE_N1}, the proposed OSLP algorithm is compared with the RLS \cite{rls2} algorithm (discussed in the subsection of fronthaul analysis) with the CDF of the SE of the users. Since, \cite{rls2} considered $N=1$, we make the same assumption in Fig. 4 to achieve a fair comparison. The results have been plotted for the system model of this paper with imperfect CSI for $K=20$ and $K=24$. It is observed that as the number of UEs increases the performance gain of the OSLP algorithm over the RLS algorithm increases. For instance, with $K=24$, the proposed OSLP algorithm gains 1.24 bit/sec/Hz as compared to the RLS algorithm with 50\% probability. Interestingly, for $K=24$, with Algorithm \ref{Algo3}, the UEs gain 0.3 bit/s/Hz over the RLS algorithm with 50\% probability. It has to be noted that for the case of $K = 24$, pilot contamination effect is taken into consideration. This shows that the RLS performance is poor when the system is heavily loaded and pilot contamination limited. 
\begin{figure}[t!]
\centering
\includegraphics[width=0.6\textwidth]{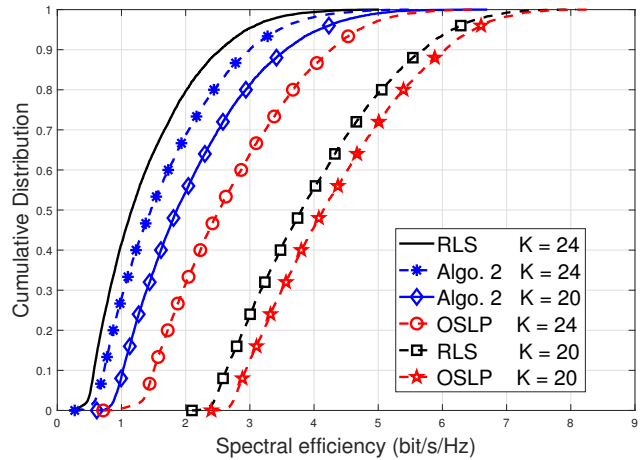}
\caption{Comparison of the proposed OSLP algorithm with other competing algorithms with varying $K$ and fixed $L=24,\ N=1$ using the CDF of the SE.}
\label{figSE_N1} \vspace{-8mm}
\end{figure}

In Fig. \ref{figSEpow1mW}, the CDF of the SE of the UEs is analyzed at low transmit power i.e., 1 mW for each UE. The number of UEs considered for the results in Fig. \ref{figSEpow1mW} is $K=10$. Besides RLS, another recursive algorithm called stochastic gradient descent (SGD) \cite{rls2} with step size 0.02 is considered. The SGD algorithm does not seem to adopt with imperfections in CSI and we observe it has inferior performance. With the proposed OSLP algorithm, the SE of the UEs with 50\% probablity have 0.24 bits/sec/Hz gain over the RLS algorithm. While at high SNR, the RLS method has comparable performance with OSLP (not shown in Fig. \ref{figSEpow1mW}). Since, practical systems operate mostly at low SNR regimes these results illustrates that the RLS and the SGD algorithms have poor performance as compared to the proposed the OSLP algorithm. The RLS algorithm \cite{rls2} is a recursive implementation of centralized ZF algorithm and does not take the side information of the channel and noise statistics into consideration. Hence, it will have inferior performance when compared to the proposed OSLP algorithm in general and especially at low SNR. 
\begin{figure}[t!]
\centering
\includegraphics[width=0.6\textwidth]{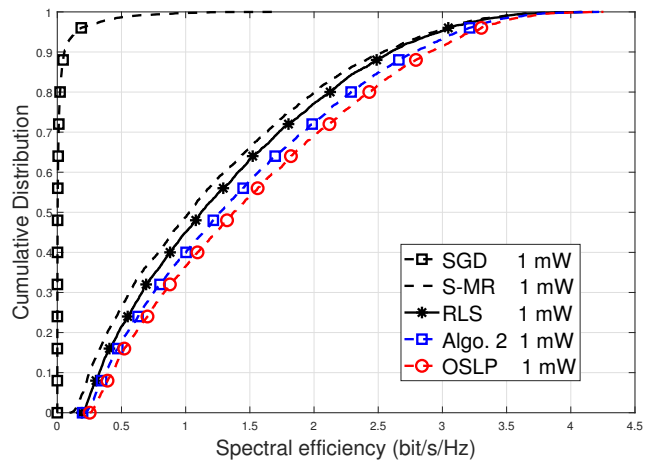}
\caption{The CDFs of the SE for the proposed OSLP algorithm and other competing algorithms at low SNR (with $p_k = 1$mW $\forall k$) for $L=24$ and $N=1$.}
\label{figSEpow1mW} \vspace{-8mm}
\end{figure} 

Finally, in Fig. \ref{figFronthaul}, we illustrate the percentage of fronthaul signaling saved i.e., the number of real symbols saved compared with the centralized processing with a fixed number of UEs and increasing number of APs (since in cell-free mMIMO $L\gg K$). The results in Fig. \ref{figFronthaul} are generated by using  Table \ref{tab_fronthaul} for $K = 20$. As an example with $L = 60$, for the proposed OSLP algorithm, the fronthaul signaling reduces by 90\% as compared to that of the centralized LMMSE algorithm. The RLS algorithm has the same fronthaul signaling requirment as for OSLP.
This analysis concludes that the proposed OSLP algorithm besides being optimal has lower fronthaul requirement as compared to centralized implementation. Since, we have proved analytically that the proposed OSLP algorithm is optimal in the sense of the minimum MSE, we did not include the simulation results due to space constraint and less added insights to the results already presented.
\begin{figure}[!htb]
\centering
\includegraphics[width=0.6\textwidth]{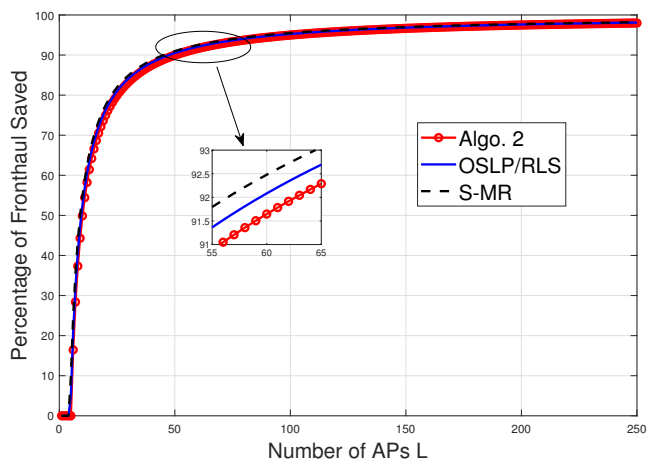}
\caption{Illustration of the percentage of {fronthaul saved} for the proposed OSLP algorithm and centralized implementation.}
\label{figFronthaul} \vspace{-10mm}
\end{figure} 
\section{Conclusion}\label{conclu}
This paper proposes a sequential uplink processing framework that is an optimal choice for any sequential implementation of a cell-free mMIMO networks, for instance radio stripes. We have shown analytically that the proposed OSLP algorithm is optimal in both the maximum SE and the minimum MSE sense. The proposed OSLP algorithm forms the benchmark to analyze the loss of performance of other competing sequential linear algorithms in the sense of SE. We have provided closed-form expressions for the achievable maximum SE and the minimum MSE for the proposed OSLP algorithm, and elaborated on the implications. We also briefly presented an alternative implementation of the same algorithm that is semi-distributed. The main benefit of the OSLP algorithm is that is achieves the same performance as the optimal centralized scheme, but requires much lower fronthaul signaling and makes use of the distributed processors located at the APs.

\section*{Appendix A: Proof of Theorem 1}\label{appendx}
We will prove Theorem \ref{TheoremOSLP} using mathematical induction. To establish that the estimate at AP $L$ using the OSLP algorithm and the estimate obtained using centralized LMMSE receiver are the same, we make use of the alternative expression for the signal estimate obtained by the OSLP algorithm given in \eqref{kalman2cent}. From \eqref{kalman2cent} and \eqref{shat_oslp_cent}, it is sufficient to show that the receive filters are equal: $\overline{\mathbf{V}}_L =$  $\mathbf{V}_L^c$. Besides showing that both receivers are same, we also show simultaneously that the MSE matrix $\mathbf{P}_L$ in the OSLP algorithm is equal to the MSE of centralized scheme i.e., $\mathbf{P}_L = \mathbf{P}_L^{c}$. It then follows that the same SE and MSE are achieved.

Recall that the LMMSE receiver for centralized scheme with $L$ APs, $\mathbf{V}_L^c$, and its corresponding error covariance matrix, $\mathbf{P}_L^c$  for \eqref{centReceiveSig} are given \cite{stevenkay}, respectively, as
\begin{align}
\mathbf{V}_L^\textrm{c} &= \mathbf{Q}\widehat{\mathbf{G}}_L^H \left(\mathbf{K}_L + \widehat{\mathbf{G}}_L\mathbf{Q}\widehat{\mathbf{G}}_L^H\right)^{-1}\label{lmmseReceiver}\\
\mathbf{P}_L^c &= \mathbf{Q} - \mathbf{V}_L^\textrm{c} 
\widehat{\mathbf{G}}_{l}\mathbf{Q}\label{mseMatrix}.
\end{align}
Now we prove the claim with mathematical induction using two cases:
Case ($i$): $L=1$. In the case where there is single AP i.e., only AP 1, then the LMMSE receive matrix at the CPU for centralized scheme is
\begin{equation}\label{Eq:Vc1}
\begin{aligned}
\mathbf{V}_1^\textrm{c} &= \mathbf{Q}\widehat{\mathbf{G}}_1^H \left(\mathbf{K}_1 + \widehat{\mathbf{G}}_1\mathbf{Q}\widehat{\mathbf{G}}_1^H\right)^{-1}\\
&=\mathbf{P}_{0}\widehat{\mathbf{H}}_1^H\left(\mathbf{\Sigma}_1 + \widehat{\mathbf{H}}_1\mathbf{P}_{0}\widehat{\mathbf{H}}_1^H\right)^{-1}\\
& = \overline{\mathbf{V}}_{1},
\end{aligned}
\end{equation}
where the last equality follows from \eqref{vbarL}. Hence, for $L=1$, we have proved that receiver matrices of the two algorithms are same. Next, using \eqref{Eq:Vc1} combined with the MSE expression in \eqref{mseMatrix}, the equivalence of the MSE matrices for both algorithms can be established as follows:
\begin{equation}\label{mseL1}
\begin{aligned}
\mathbf{P}_1^c &= \mathbf{Q} - \overline{\mathbf{V}}_1 
\widehat{\mathbf{G}}_{1}\mathbf{Q}\\
&\overset{(a)}{=}\left(\mathbf{I}-\mathbf{T}_1\widehat{\mathbf{H}}_{1}\right)\mathbf{P}_0\\
&=\mathbf{P}_1.
\end{aligned}
\end{equation}
In \eqref{mseL1}, $(a)$ follows from \eqref{Eq:Vc1} and \eqref{vbarL}. Hence, the estimate $\widehat{\mathbf{s}}_1$ obtained using OSLP algorithm and centralized scheme are equivalent for the case of $L=1$.

Case ($ii$): Assume, that this equivalence of receiver matrices and it's MSE matrices holds for the cases $L \in \{1,\cdots, l\}$ i.e., 
\begin{equation}\label{generL_V}
\begin{aligned}
\mathbf{V}_l^\textrm{c}&=\overline{\mathbf{V}}_l\\
&\overset{(a)}{=}\mathbf{Q}\widehat{\mathbf{G}}_l^H \left(\mathbf{K}_l + \widehat{\mathbf{G}}_l\mathbf{Q}\widehat{\mathbf{G}}_l^H\right)^{-1}\\
&\overset{(b)}{=}\begin{bmatrix}\overline{\mathbf{V}}_{(l-1)} - \mathbf{T}_l\widehat{\mathbf{H}}_{l}\overline{\mathbf{V}}_{(l-1)} & \mathbf{T}_l\end{bmatrix}
\end{aligned}
\end{equation}
and also the MSE matrices
\begin{equation}\label{generL_P}
\begin{aligned}
\mathbf{P}_l^c &=\mathbf{P}_l\\
&\overset{(c)}{=}\mathbf{Q}-\mathbf{V}_l^c \widehat{\mathbf{G}}_{l}\mathbf{Q}\\
&\overset{(d)}{=} \mathbf{Q}-\overline{\mathbf{V}}_l \widehat{\mathbf{G}}_{l}\mathbf{Q}.
\end{aligned}
\end{equation}
In \eqref{generL_V}, $(a)$ follows from \eqref{lmmseReceiver} and $(b)$ is from \eqref{vbarL}. Similarly, $(c)$ in \eqref{generL_P}  is due to \eqref{mseMatrix} and $(d)$ follows from the assumption in \eqref{generL_V}.

We will use these expressions to show that the equivalence holds for the case of $L=(l+1)$ which would complete the proof. For simplicity we let the inverse term in the LMMSE receiver matrix in \eqref{lmmseReceiver} with $L=(l+1)$ APs
{\small
\begin{equation}\label{matInv}
\begin{aligned}
\left(\begin{bmatrix}
\mathbf{K}_l & \mathbf{0}\\
\mathbf{0} & \mathbf{\Sigma}_{(l+1)}
\end{bmatrix} + \begin{bmatrix}
\widehat{\mathbf{G}}_{l}\\
\widehat{\mathbf{H}}_{(l+1)}
\end{bmatrix}\mathbf{Q}\begin{bmatrix}
\widehat{\mathbf{G}}_{l}\\
\widehat{\mathbf{H}}_{(l+1)}
\end{bmatrix}^H\right)^{-1}&=\begin{bmatrix}
\mathbf{A} & \mathbf{B} \\
\mathbf{C} & \mathbf{D}
\end{bmatrix}^{-1}=\begin{bmatrix}
\bar{\mathbf{A}} & \bar{\mathbf{B}} \\
\bar{\mathbf{C}} & \bar{\mathbf{D}}
\end{bmatrix}.
\end{aligned}
\end{equation}
} 
The LMMSE receiver matrix is given by the
\begin{equation}\label{Eq:VcL}
\begin{aligned}
\mathbf{V}_{(l+1)}^\textrm{c}
&=\mathbf{Q}\begin{bmatrix}
\widehat{\mathbf{G}}_{l}\\
\widehat{\mathbf{H}}_{(l+1)}
\end{bmatrix}^H
\begin{bmatrix}
\mathbf{A} & \mathbf{B} \\
\mathbf{C} & \mathbf{D}
\end{bmatrix}^{-1}\\
&=\begin{bmatrix}
\mathbf{Q}\widehat{\mathbf{G}}_{l}^H\bar{\mathbf{A}} + \mathbf{Q}\widehat{\mathbf{H}}_{(l+1)}^H\bar{\mathbf{C}}
& \mathbf{Q}\widehat{\mathbf{G}}_{l}^H\bar{\mathbf{B}} + \mathbf{Q}\widehat{\mathbf{H}}_{(l+1)}^H\bar{\mathbf{D}}\end{bmatrix}\\
&=\begin{bmatrix}\mathbf{F}_1 & \mathbf{F}_2 \end{bmatrix},
\end{aligned}
\end{equation}
where the new variables are defined as
\begin{equation}
\begin{aligned}
\mathbf{A}&= \mathbf{K}_{l} + \widehat{\mathbf{G}}_{l}\mathbf{Q}\widehat{\mathbf{G}}_{l}^H,\\
\mathbf{B}&=\widehat{\mathbf{G}}_{l}\mathbf{Q}\widehat{\mathbf{H}}_{(l+1)}^H,\\
\mathbf{C}&= \widehat{\mathbf{H}}_{(l+1)}\mathbf{Q}\widehat{\mathbf{G}}_{l}^H,\\
\mathbf{D}&= 
\mathbf{\Sigma}_{(l+1)} + \widehat{\mathbf{H}}_{(l+1)}\mathbf{Q}\widehat{\mathbf{H}}_{(l+1)}^H
\end{aligned}
\end{equation}
and
\begin{equation}
\begin{aligned}
\ \mathbf{F}_1 &= \mathbf{Q}\widehat{\mathbf{G}}_{l}^H\bar{\mathbf{A}} + \mathbf{Q}\widehat{\mathbf{H}}_{(l+1)}^H\bar{\mathbf{C}},\\
\mathbf{F}_2 &= \mathbf{Q}\widehat{\mathbf{G}}_{l}^H\bar{\mathbf{B}} + \mathbf{Q}\widehat{\mathbf{H}}_{(l+1)}^H\bar{\mathbf{D}}.
\end{aligned}
\end{equation}
To find the matrix inverse in \eqref{matInv}, we utilize the following block matrix inversion identity \cite{horn2012matrix} 
\begin{equation}\label{matrixInvLemma}
\begin{bmatrix}
\mathbf{A} & \mathbf{B} \\
\mathbf{C} & \mathbf{D}
\end{bmatrix}^{-1} = \begin{bmatrix}
\mathbf{A}^{-1} + \mathbf{A}^{-1}\mathbf{B}\left(\mathbf{D} - \mathbf{CA}^{-1}\mathbf{B}\right)^{-1}\mathbf{CA}^{-1} &
-\mathbf{A}^{-1}\mathbf{B}\left(\mathbf{D} - \mathbf{CA}^{-1}\mathbf{B}\right)^{-1} \\
-\left(\mathbf{D}-\mathbf{CA}^{-1}\mathbf{B}\right)^{-1}\mathbf{CA}^{-1} &
\left(\mathbf{D} - \mathbf{CA}^{-1}\mathbf{B}\right)^{-1}
\end{bmatrix}
\end{equation}
First, we compute $\bar{\mathbf{D}}$ because it simplifies other expressions:
\begin{equation}\label{invTermD}
\begin{aligned}
\bar{\mathbf{D}}&^{-1} = \left(\mathbf{D} - \mathbf{CA}^{-1}\mathbf{B}\right)\\&=\mathbf{\Sigma}_{(l+1)} + \widehat{\mathbf{H}}_{(l+1)}\mathbf{Q}\widehat{\mathbf{H}}_{(l+1)}^H - \widehat{\mathbf{H}}_{(l+1)}\mathbf{Q}\widehat{\mathbf{G}}_{l}^H \mathbf{A}^{-1}\widehat{\mathbf{G}}_{l}\mathbf{Q}\widehat{\mathbf{H}}_{(l+1)}^H\\
&=\widehat{\mathbf{H}}_{(l+1)}\left(\mathbf{Q}- \mathbf{Q}\widehat{\mathbf{G}}_{l}^H \mathbf{A}^{-1}\widehat{\mathbf{G}}_{l}\mathbf{Q}\right)\widehat{\mathbf{H}}_{(l+1)}^H + \mathbf{\Sigma}_{(l+1)}\\
&\overset{(a)}{=}\widehat{\mathbf{H}}_{(l+1)}\left(\mathbf{Q}-\overline{\mathbf{V}}_l \widehat{\mathbf{G}}_{l}\mathbf{Q}\right)\widehat{\mathbf{H}}_{(l+1)}^H + \mathbf{\Sigma}_{(l+1)}\\
&\overset{(b)}{=} \widehat{\mathbf{H}}_{(l+1)}\mathbf{P}_l\widehat{\mathbf{H}}_{(l+1)}^H + \mathbf{\Sigma}_{(l+1)}.
\end{aligned}
\end{equation}
In \eqref{invTermD}, $(a)$ and $(b)$ follows from the assumptions in \eqref{generL_V} and \eqref{generL_P}, respectively. Next, \eqref{invTermD} can be utilized to simplify the variables
$\bar{\mathbf{A}}, \bar{\mathbf{B}},\bar{\mathbf{C}},$ and $\bar{\mathbf{D}}$ as follows:
\begin{equation}\label{invTerms}
\begin{aligned}
\bar{\mathbf{D}}&=(\widehat{\mathbf{H}}_{(l+1)}\mathbf{P}_l\widehat{\mathbf{H}}_{(l+1)}^H + \mathbf{\Sigma}_{(l+1)})^{-1},\\
\bar{\mathbf{A}}&=\mathbf{A}^{-1}  + \mathbf{A}^{-1}\widehat{\mathbf{G}}_{l}\mathbf{Q}\widehat{\mathbf{H}}_{(l+1)}^H\bar{\mathbf{D}}^{-1}\widehat{\mathbf{H}}_{(l+1)}\mathbf{Q}\widehat{\mathbf{G}}_{l}^H\mathbf{A}^{-1},\\
\bar{\mathbf{B}}&=-\mathbf{A}^{-1}\widehat{\mathbf{G}}_{l}\mathbf{Q}\widehat{\mathbf{H}}_{(l+1)}^H\bar{\mathbf{D}}^{-1},\\
\bar{\mathbf{C}}&=-\bar{\mathbf{D}}^{-1}\widehat{\mathbf{H}}_{(l+1)}\mathbf{Q}\widehat{\mathbf{G}}_{l}^H\mathbf{A}^{-1}.
\end{aligned}
\end{equation}
In \eqref{invTerms}, the term $\left(\mathbf{D} - \mathbf{CA}^{-1}\mathbf{B}\right)$ in all the variables is substituted by simplified expression from \eqref{invTermD}. Now that we have obtained all the required variables to compute the inverse term in \eqref{Eq:VcL}, we proceed to first simplify $\mathbf{F}_1$ as
\begin{equation}\label{F1}
\begin{aligned}
\mathbf{F}_1 &=\mathbf{Q}\widehat{\mathbf{G}}_{l}^H\bar{\mathbf{A}} + \mathbf{Q}\widehat{\mathbf{H}}_{(l+1)}^H\bar{\mathbf{C}}\\&=\mathbf{Q}\widehat{\mathbf{G}}_{l}^H\mathbf{A}^{-1}  +\mathbf{Q}\widehat{\mathbf{G}}_{l}^H \mathbf{A}^{-1}\widehat{\mathbf{G}}_{l}\mathbf{Q}\widehat{\mathbf{H}}_{(l+1)}^H\bar{\mathbf{D}}^{-1}\widehat{\mathbf{H}}_{(l+1)}\mathbf{Q}\widehat{\mathbf{G}}_{l}^H\mathbf{A}^{-1} + \mathbf{Q}\widehat{\mathbf{H}}_{(l+1)}^H\bar{\mathbf{C}}\\
&=\overline{\mathbf{V}}_l -  \left(\mathbf{Q}-\overline{\mathbf{V}}_l\widehat{\mathbf{G}}_{l}\mathbf{Q}\right)\widehat{\mathbf{H}}_{(l+1)}^H\bar{\mathbf{D}}^{-1}\widehat{\mathbf{H}}_{(l+1)}^H\overline{\mathbf{V}}_l \\
&=\overline{\mathbf{V}}_l - \mathbf{P}_l\widehat{\mathbf{H}}_{(l+1)}^H\bar{\mathbf{D}}^{-1}\widehat{\mathbf{H}}_{(l+1)}^H\overline{\mathbf{V}}_l\\
&\overset{(a)}{=}\overline{\mathbf{V}}_l -  \mathbf{T}_{(l+1)}\widehat{\mathbf{H}}_{(l+1)}\overline{\mathbf{V}}_l,
\end{aligned}
\end{equation}
where $(a)$ follows from \eqref{Tl} and \eqref{invTermD}. Next, we simplify the remaining term $\mathbf{F}_2$ as
\begin{equation}
\begin{aligned}
\mathbf{F}_2&=\mathbf{Q}\widehat{\mathbf{G}}_{l}^H\bar{\mathbf{B}} + \mathbf{Q}\widehat{\mathbf{H}}_{(l+1)}^H\bar{\mathbf{D}}\\
&=\left(\mathbf{Q}-\mathbf{Q}\widehat{\mathbf{G}}_{l}^H\mathbf{A}^{-1}\widehat{\mathbf{G}}_{l}\mathbf{Q}\right)\widehat{\mathbf{H}}_{(l+1)}^H\bar{\mathbf{D}}^{-1}\\ &=\mathbf{P}_l\widehat{\mathbf{H}}_{(l+1)}^H\bar{\mathbf{D}}^{-1}\\
&= \mathbf{T}_{(l+1)}.
\end{aligned}
\end{equation}
Therefore, \eqref{Eq:VcL} using \eqref{vbarL} becomes
\begin{equation}\label{lmmseL}
\begin{aligned}
\mathbf{V}_{(l+1)}^\textrm{c}&=\begin{bmatrix}\overline{\mathbf{V}}_l - \mathbf{T}_{(l+1)}\widehat{\mathbf{H}}_{(l+1)}\overline{\mathbf{V}}_l & \mathbf{T}_{(l+1)}\end{bmatrix}\\
&=\overline{\mathbf{V}}_{(l+1)}
\end{aligned}
\end{equation}
which is the desired result for the equivalence of the receiver matrices. Moreover, the error covariance equivalence is established as
\begin{equation}\label{mseL}
\begin{aligned}
\mathbf{P}&_{(l+1)}^c = \mathbf{Q} - \mathbf{V}_{(l+1)}^{c} \begin{bmatrix}
\widehat{\mathbf{G}}_{l}\\
\widehat{\mathbf{H}}_{(l+1)}
\end{bmatrix}\mathbf{Q}\\
&\overset{(a)}{=}  \mathbf{Q} - \overline{\mathbf{V}}_l\widehat{\mathbf{G}}_{l} \mathbf{Q} + \mathbf{T}_{(l+1)}\widehat{\mathbf{H}}_{(l+1)}\overline{\mathbf{V}}_l\widehat{\mathbf{G}}_{l} \mathbf{Q}-\mathbf{T}_{(l+1)}\widehat{\mathbf{H}}_{(l+1)} \mathbf{Q}\\
&\overset{(b)}{=}\mathbf{P}_l -  \mathbf{T}_{(l+1)}\widehat{\mathbf{H}}_{(l+1)}\mathbf{P}_l\\
&=\left(\mathbf{I}-\mathbf{T}_{(l+1)}\widehat{\mathbf{H}}_{(l+1)}\right)\mathbf{P}_l\\
&=\mathbf{P}_{(l+1)}.
\end{aligned}
\end{equation}
In \eqref{mseL}, $(a)$ follows from \eqref{lmmseL} and $(b)$ is due to \eqref{generL_P}. Thus, the \eqref{generL_V} and \eqref{generL_P} holds for any general $L$. This concludes the proof.
\bibliographystyle{IEEEtran}
\bibliography{IEEEabrv,reff}



\end{document}